\newif\ifArxiv
\newcommand {\mm}[1]   {\ifmmode{#1}\else{\mbox{\(#1\)}}\fi}
\newcommand{\ignore}[1]{}
\newsavebox{\smallProofsym}                            
\long\def\@makecaption#1#2{%
  \vskip\abovecaptionskip
  \sbox\@tempboxa{\small #1: #2}%
  \ifdim \wd\@tempboxa >\hsize
    \small #1: #2\par
  \else
    \global \@minipagefalse
    \hb@xt@\hsize{\hfil\box\@tempboxa\hfil}%
  \fi
  \vskip\belowcaptionskip}
 \newtheorem{lemma}{Lemma}
 \newtheorem{claim}{Claim}
 \newtheorem{theorem}{Theorem}
\newtheorem*{namedthm}{\namedthmname}
\newcounter{namedthm}
\newenvironment{named}[1]
  {\def\namedthmname{#1}%
   \refstepcounter{namedthm}%
   \namedthm\def\@currentlabel{#1}}
  {\endnamedthm}
\newcommand{\InfSup}        {\mm{c_0}}
\newcommand{\SupSup}        {\mm{C_0}}
\newcommand{\Rspace}        {\mm{{\mathbb R}}}
\newcommand{\Zspace}        {\mm{{\mathbb Z}}}
\newcommand{\Hdisk}         {\mm{{\mathbb H}}}
\newcommand{\FCC}           {\mm{{\rm FCC}}}
\newcommand{\Hgroup}[2]     {\mm{H_{#1}{({#2})}}}
\newcommand{\Dgm}[2]        {\mm{\rm Dgm}_{#1}{({#2})}}
\newcommand{\Domain}[2]     {\mm{\rm Dom}_{#1}{({#2})}}
\newcommand{\Image}[2]      {\mm{\rm Im}_{#1}{({#2})}}
\newcommand{\Kernel}[2]     {\mm{\rm Ker}_{#1}{({#2})}}
\newcommand{\cutoff}        {\mm{\omega_0}}
\newcommand{\Length}[1]     {\mm{|{#1}|}}
\newcommand{\MST}[1]        {\mm{{\rm MST}{({#1})}}}
\newcommand{\hexMST}[1]     {\mm{{\rm MST_{hex}}{({#1})}}}
\newcommand{\MSTratio}[2]   {\mm{\mu{({#1},{#2})}}}
\newcommand{\AF}            {\mm{A}}
\newcommand{\AL}            {\mm{\Lambda}}
\newcommand{\BL}            {\mm{\Delta}}
\newcommand{\Frontier}[2]   {\mm{\partial D_{#1}{({#2})}}}
\newcommand{\Tree}[2]       {\mm{T_{#1}}}
\newcommand{\Uree}[2]       {\mm{U_{#1}}}
\newcommand{\Vree}[2]       {\mm{V_{#1}}}
\newcommand{\Thicken}[2]    {\mm{D_{#1}{({#2})}}}
\newcommand{\card}[1]       {\mm{{\#}{#1}}}
\newcommand{\Edist}[2]      {\mm{\|{#1}-{#2}\|}}
\newcommand{\hexdist}[2]    {\mm{\|{#1}-{#2}\|_{hex}}}
\newcommand{\norm}[1]       {\mm{\|{#1}\|}}
\newcommand{\perimeter}[1]  {\mm{\rm per}{({#1})}}
\newcommand{\aaa}           {\mm{\bf a}}
\newcommand{\bbb}           {\mm{\bf b}}
\newcommand{\ccc}           {\mm{\bf c}}
\newcommand{\uuu}           {\mm{\bf u}}
\newcommand{\vvv}           {\mm{\bf v}}
\newcommand{\xxx}           {\mm{\bf x}}
\newcommand{\yyy}           {\mm{\bf y}}
\newcommand{\zzz}           {\mm{\bf z}}
\newcommand{\ee}            {\mm{\varepsilon}}
\newcommand{\dd}            {\mm{\delta}}
\newcommand{\Skip}[1]       {}
\definecolor{blue-green}{rgb}{0.0, 0.5, .8}
\title{The Euclidean MST-ratio of Bi-colored Lattices}
\author{S.\ Cultrera di Montesano, O.\ Draganov, H.\ Edelsbrunner and M.\ Saghafian}
\titlerunning{The Euclidean MST-ratio of Bi-colored Lattices}
\title{The Euclidean MST-ratio for Bi-colored Lattices}
\titlerunning{The Euclidean MST-ratio of Bi-colored Lattices}
\author{Sebastiano Cultrera di Montesano}{ISTA (Institute of Science and Technology Austria), Kloster\-neu\-burg, Austria}{Sebastiano.Cultrera@ist.ac.at}{https://orcid.org/0000-0001-6249-0832}{}
\author{Ond\v{r}ej Draganov}{ISTA (Institute of Science and Technology Austria), Kloster\-neu\-burg, Austria}{Ondrej.Draganov@ist.ac.at}{https://orcid.org/
0000-0003-0464-3823}{}
\author{Herbert Edelsbrunner}{ISTA (Institute of Science and Technology Austria), Kloster\-neu\-burg, Austria}{Herbert.Edelsbrunner@ist.ac.at}{https://orcid.org/0000-0002-9823-6833}{}
\author{Morteza Saghafian}{ISTA (Institute of Science and Technology Austria), Kloster\-neu\-burg, Austria}{Morteza.Saghafian@ist.ac.at}{https://orcid.org/0000-0002-4201-5775}{}
\authorrunning{Cultrera di Montesano, Draganov, Edelsbrunner, and Saghafian}
\keywords{Minimum spanning trees, Steiner ratio, lattices, partitions.}
\begin{document}
\maketitle

\begin{abstract}
  Given a finite set, $\AF \subseteq \Rspace^2$, and a subset, $B \subseteq \AF$, the \emph{MST-ratio} is the combined length of the minimum spanning trees of $B$ and $\AF \setminus B$ divided by the length of the minimum spanning tree of $\AF$.
  The question of the supremum, over all sets $\AF$, of the maximum, over all subsets $B$, is related to the Steiner ratio, and we prove this sup-max is between $2.154$ and $2.427$.
  Restricting ourselves to $2$-dimensional lattices, we prove that the sup-max is $2$, while the inf-max is $1.25$.
  By some margin the most difficult of these results is the upper bound for the inf-max, which we prove by showing that the hexagonal lattice cannot have MST-ratio larger than $1.25$.
\end{abstract}


\section{Introduction}
\label{sec:1}

The recent development of measuring the interaction between two or more sets of points with methods from topological data analysis motivates the discrete geometric question about minimum spanning trees studied in this paper; see \cite{Car09,EdHa10} for background in this general area.
We refer to the measured interaction as \emph{mingling},
in which higher values corresponding to more mingling.
The ambiguity of the term is deliberate and leaves the concrete meaning to the geometric and algebraic constructions described in \cite{CDES23}.
As explained in the appendix of the current paper, one of these measurements can be expressed in elementary terms:
\begin{named}{Definition}
  Given a finite set, $\AF \subseteq \Rspace^2$, we write $\MST{\AF}$ for the \emph{(Euclidean) minimum spanning tree} of the complete graph on $\AF$, with edge weights equal to the distances between the points.
  For $B \subseteq \AF$, the \emph{MST-ratio} of $\AF$ and $B$ is the combined length of the minimum spanning trees of $B$ and $\AF \setminus B$, divided by the length of the minimum spanning tree of $\AF$:
  \begin{align}
    \MSTratio{\AF}{B}  &= \frac{\Length{\MST{B}}+\Length{\MST{\AF \setminus B}}}{\Length{\MST{\AF}}} .
  \end{align}
\end{named}
To make use of this measure for statistical or other purposes, we ought to know how small and how large the ratio can get (the extremal question), and how it behaves for random data.
A first result in the latter direction can be found in \cite{DPT23}, who prove that for points chosen uniformly at random in the unit square, the expected MST-ratio for a random partition into two subsets is at least $\sqrt{2} - \ee$, for any $\ee > 0$.
In the non-random setting, we study the maximum MST-ratio, over all partitions of $A$ into two sets, and consider both the infimum and supremum of the maximum, over all sets in a class of point sets.
If these sets are infinite, like for example $2$-dimensional lattices, then we talk about the supremum rather than the maximum MST-ratio.

\smallskip
Given any set, $\AF$, the minimum MST-ratio is achieved by removing the longest edge from $\MST{\AF}$ and letting $B$ and $\AF \setminus B$ be the vertices of the resulting two trees, so it is less than $1$.
Indeed, any other partition of $\AF$ would produce two minimum spanning trees that together are at least as long as $\MST{B}$ and $\MST{\AF \setminus B}$.
More interestingly, the maximum MST-ratio is related to the \emph{Steiner ratio} of the Euclidean plane \cite{GiPo68,JaKo34}, and we exploit this connection to prove that the supremum is between $2.154$ and $2.427$
(Theorem~\ref{thm:maximum_MST-ratio_for_finite_sets} in Section~\ref{sec:2}).
The infimum of the maximum is again less interesting: allowing ourselves to pick points arbitrarily close to each other,
and one far away, this infimum can be seen to be arbitrarily close to $1$.

\smallskip
This motivates us to study the MST-ratio for a restricted class of sets, and our choice are the (Euclidean) lattices, which are well studied objects with many applications in mathematics and beyond; see e.g.\ \cite{Zhi15}.
Since we optimize over subsets of an infinite set, we talk about the supremum rather than the maximum, and taking a sequence of progressively larger but finite portions of such a lattice, we have well defined minimum spanning trees and can study the asymptotic behavior of the MST-ratio.
Our main result is that the supremum MST-ratio of the hexagonal lattice is $1.25$ (Theorem~\ref{thm:maximum_MST-ratio_of_hexagonal_lattice} in Section~\ref{sec:4}).
Observe that this is an upper bound on the infimum, over all lattices, of the supremum MST-ratio.
We complement this with a matching lower bound (Claim~\ref{clm:lower_bound_for_infmax} in Section~\ref{sec:3}), and with matching lower and upper bounds for the supremum, over all lattices, of the supremum MST-ratio, which we establish is $2$ (Claims~\ref{clm:lower_bound_for_supmax} and \ref{clm:upper_bound_for_supmax} in Section~\ref{sec:3}).

\section{The Maximum MST-ratio for Finite Sets}
\label{sec:2}

The main question we ask is to what extent two minimum spanning trees can be longer than a single minimum spanning tree of the same points; see the definition of the MST-ratio of a set $\AF \subseteq \Rspace^2$ and a subset $B \subseteq A$ in the introduction.
We are interested in the maximum MST-ratio, over all subsets $B \subseteq A$, and in the supremum and infimum of this maximum, over all finite sets $\AF \subseteq \Rspace^2$.

\smallskip
The supremum is related to the well-studied Steinter tree problem.
Given a finite set, $X \subseteq \Rspace^2$, the \emph{Steiner tree} of $X$ is the minimum spanning tree of $X \cup B$, in which $B = B(X)$ is chosen to minimize the length of this tree.
The \emph{Steiner ratio} of the Euclidean plane is the infimum of the length ratio, $\Length{\MST{X \cup B}} / \Length{\MST{X}}$, over all finite sets $X$ and $B$ in the plane.
There are sets $X \subseteq \Rspace^2$ for which the ratio is only $\sqrt{3}/2 = 0.866\ldots$;
take for example the vertices of an equilateral triangle as $X$ and the barycenter of this triangle as the sole point in $B$.
It is conjectured that $\sqrt{3}/2$ is the Steiner ratio of the Euclidean plane \cite{GiPo68}, but the current best lower bound proved in \cite{ChGr85} is only $0.824\ldots$.
We use this bound to prove upper and lower bounds for the supremum maximum MST-ratio:
\begin{theorem}
  \label{thm:maximum_MST-ratio_for_finite_sets}
  The supremum, over all finite $\AF \subseteq \Rspace^2$, of the maximum, over all subsets $B \subseteq \AF$, of the MST-ratio satisfies $2.154 \leq \sup_{\AF} \max_B \MSTratio{\AF}{B} \leq 2.427$.
\end{theorem}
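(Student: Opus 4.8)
The plan is to prove the two inequalities separately, both by exploiting the stated connection between the MST-ratio and the Steiner ratio of the Euclidean plane.

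For the upper bound $\sup_\AF \max_B \MSTratio{\AF}{B} \le 2.427$, the key observation is that for any partition $\AF = B \sqcup (\AF \setminus B)$, the tree $\MST{B} \cup \MST{\AF \setminus B}$, together with at most one extra edge joining the two components, is a spanning connected subgraph of $\AF$, hence has length at least $\Length{\MST{\AF}}$ — but this only gives a lower bound on the numerator, which is the wrong direction. Instead I would argue as follows. Take $\MST{\AF}$ and delete a single well-chosen edge $e$, splitting $\AF$ into $B'$ and $\AF \setminus B'$; then $\Length{\MST{B'}} + \Length{\MST{\AF \setminus B'}} \le \Length{\MST{\AF}} - \Length{e}$, so the min over $B$ is below $1$. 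For the \emph{max} over $B$, I would instead bound the numerator from above. Concatenating $\MST{B}$, $\MST{\AF \setminus B}$ and one connecting edge gives a spanning tree of $\AF$; conversely, given $\MST{\AF}$, I want to produce short trees on $B$ and on $\AF \setminus B$. The route through the Steiner ratio: regard $\MST{B}$ as related to a Steiner tree of $B$ with Steiner points drawn from $\AF \setminus B$, and symmetrically. More precisely, $\MST{\AF} \supseteq$ a spanning structure in which $B$-points are connected possibly through $(\AF\setminus B)$-points, so $\Length{\MST{\AF}}$ is at least the length of a Steiner tree of $B$ with extra vertices in $\AF \setminus B$, which by the Steiner ratio bound $\rho \ge 0.824$ is at least $\rho \cdot \Length{\MST{B}}$, and likewise $\Length{\MST{\AF}} \ge \rho \cdot \Length{\MST{\AF\setminus B}}$ — but that double-counts. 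The correct accounting splits $\MST{\AF}$ into the part ``responsible for'' $B$ and the part ``responsible for'' $\AF \setminus B$: contract one color class, bound each resulting tree by $1/\rho$ times the other color's MST, and combine to get the numerator at most $\tfrac{2}{\rho}\Length{\MST{\AF}}$, hence the ratio at most $2/\rho \le 2/0.824 = 2.427\ldots$. Making this decomposition of $\MST{\AF}$ precise — splitting its edges cleanly between the two colors so the two pieces are simultaneously Steiner-like for both $B$ and $\AF \setminus B$ — is the main obstacle and where the constant $2$ must be justified rather than a smaller number.

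For the lower bound $\sup_\AF \max_B \MSTratio{\AF}{B} \ge 2.154$, I would exhibit an explicit family of configurations. The natural candidate is built from the extremal Steiner example: take many small equilateral triangles whose barycenters, taken together, form a coarse grid, color one color class by the triangle vertices and the other by the barycenters (or use a recursive/fractal Steiner tree, which is how $0.824$ is approached). In such a configuration $\MST{\AF}$ can route through both colors and is short (roughly the Steiner tree of each cluster plus inter-cluster links), while $\MST{B}$ and $\MST{\AF \setminus B}$ are each forced to stay within one color and pay the full equilateral-triangle penalty, giving a ratio approaching $2 \cdot (\sqrt{3}/2)^{-1}\cdot(\text{correction}) $; with the $0.824$ recursive construction one pushes this to $2/1.2\overline{\phantom{x}} \ge 2.154$. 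Concretely I would: (i) fix a large $N$ and build $N$ translated copies of a near-optimal Steiner configuration $X_k$ (the depth-$k$ recursive construction of Chung--Graham), placed far apart so inter-cluster distances dominate nothing, with $B$ = original points and $\AF \setminus B$ = Steiner points; (ii) compute $\Length{\MST{\AF}} \le \sum_{\text{clusters}} \Length{\MST{X_k \cup B_k}} + (\text{negligible links})$; (iii) compute $\Length{\MST{B}} + \Length{\MST{\AF\setminus B}} \ge \sum_{\text{clusters}}\big(\Length{\MST{X_k}} + \Length{\MST{B_k}}\big) - (\text{negligible})$; (iv) take ratios, let the cluster count and then $k$ go to infinity, and verify the limit is $\ge 2.154$. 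The delicate point here is controlling the inter-cluster edges of $\MST{\AF}$ versus those of $\MST{B}$ and $\MST{\AF \setminus B}$ — one must scale the separation between clusters so that these linking edges are asymptotically negligible in all three trees simultaneously, which forces a careful choice of the geometry (e.g.\ linear arrangement with geometrically growing gaps) and is the real obstacle on this side.

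In summary, both bounds reduce to the Steiner ratio $\rho$ of the plane: the upper bound follows from a clean ``two-coloring'' decomposition of $\MST{\AF}$ combined with $\Length{\MST{\AF}} \ge \rho\,\Length{\MST{\text{color class}}}$ on each piece, giving $2/\rho \le 2.427$; the lower bound follows from stacking many well-separated copies of a near-optimal Steiner configuration and reading off the ratio $\gtrsim 2.154$ in the limit. I expect the decomposition step on the upper-bound side and the negligibility-of-links bookkeeping on the lower-bound side to be the two hard parts; the remaining estimates are routine.
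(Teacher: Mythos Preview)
Your upper-bound discussion actually contains the correct argument and then discards it. The two inequalities you write down, $\Length{\MST{\AF}} \ge \rho\,\Length{\MST{B}}$ and $\Length{\MST{\AF}} \ge \rho\,\Length{\MST{\AF\setminus B}}$ (each holding because $\MST{\AF}$ contains a Steiner tree for either colour class), are \emph{not} double-counting anything: simply add them to get $2\,\Length{\MST{\AF}} \ge \rho\,(\Length{\MST{B}}+\Length{\MST{\AF\setminus B}})$, hence $\MSTratio{\AF}{B} \le 2/\rho \le 2/0.824 = 2.427$. That is exactly the paper's proof. Your replacement idea---splitting the edges of $\MST{\AF}$ into two disjoint parts, one per colour---would, if it worked, yield the stronger bound $1/\rho \approx 1.21$, which is false (the lower-bound construction already exceeds $2$). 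So the decomposition you flag as the ``main obstacle'' is both unnecessary and impossible.

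Your lower-bound construction has a genuine gap. With $B=$ terminals and $\AF\setminus B=$ Steiner points, a single equilateral-triangle cluster gives $\Length{\MST{B}}=2$, $\Length{\MST{\AF\setminus B}}=0$, $\Length{\MST{\AF}}=\sqrt{3}$, so ratio $2/\sqrt{3}\approx 1.155$; stacking $N$ far-apart clusters with inter-cluster link length $L$ gives ratio $(2N+2L)/(\sqrt{3}\,N+L)$, which is monotone in $L/N$ and bounded above by $2$, never reaching $2.154$. Recursive Steiner configurations do not fix this: the Steiner-point colour class always has a short MST relative to the terminal class. The paper's construction is a single $7$-point example with no limits or clusters: let $B$ be the vertices of a unit equilateral triangle and $\AF\setminus B$ the vertices of an $\ee$-translated copy of the \emph{same} triangle together with its barycentre. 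Then $\Length{\MST{B}}=2$, $\Length{\MST{\AF\setminus B}}=\sqrt{3}$, and $\Length{\MST{\AF}}<\sqrt{3}+3\ee$, so $\MSTratio{\AF}{B}>(2+\sqrt{3})/(\sqrt{3}+3\ee)\to 2.154\ldots$ as $\ee\to 0$. The idea you are missing is that \emph{both} colour classes should contain (nearly) all the expensive points, so that each mono-chromatic MST pays for the whole configuration while $\MST{\AF}$ pays only once.
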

\begin{proof}
  We first prove the upper bound.
  Since $B$ is a subset of $\AF$, the MST of $\AF$ cannot be shorter than the Steiner tree of $B$.
  Similarly, the MST of $\AF$ cannot be shorter than the Steiner tree of $\AF \setminus B$.
  Hence, $\Length{\MST{\AF}} \geq 0.824\ldots \cdot \Length{\MST{B}}$ and $\Length{\MST{\AF}} \geq 0.824\ldots \cdot \Length{\MST{\AF \setminus B}}$.
  It follows that the ratio satisfies
  \begin{align}
    \MSTratio{\AF}{B}  &\leq  \frac{2 \cdot [\Length{\MST{B}} + \Length{\MST{\AF \setminus B}}]}{0.824\ldots \cdot [\Length{\MST{B}} + \Length{\MST{\AF \setminus B}}]}
      =  2.426\ldots .
  \end{align}
  This inequality holds for every $B \subseteq \AF$.
  We second prove the lower bound for the sup-max by constructing a set $\AF$ of seven points that implies the inequality.
  Let $B \subseteq \AF$ be the three vertices of an equilateral triangle with unit length edges, and let $\AF \setminus B$ be the vertices of another equilateral triangle with unit length edges, but this time together with the barycenter.
  Hence, $\Length{\MST{B}} = 2$ and $\Length{\MST{\AF \setminus B}} = \sqrt{3}$.
  Assuming the distance between corresponding vertices of the two equilateral triangles is less than $\ee > 0$, we have $\Length{\MST{\AF}} < \sqrt{3} + 3\ee$.
  This implies 
  \begin{align}
    \MSTratio{\AF}{B}  &>  \frac{2 + \sqrt{3}}{\sqrt{3} + 3\ee} 
    >  2.154\ldots - 4\ee.
  \end{align}
  Since we can make $\ee > 0$ arbitrarily small, this implies the claimed lower bound.
\end{proof}

The example used to establish the lower bound can be extended to larger numbers of points, e.g.\ the following disjoint union of three lattices:
$B$ is the hexagonal lattice (to be defined shortly), and $\AF \setminus B$ is a slightly shifted copy of the hexagonal lattice, together with the barycenters of the triangles in every fourth row, which is a rectangular lattice with distances $1$ and $\sqrt{3}$ between consecutive rows and columns.

\smallskip
The question about the infimum of the maximum MST-ratio turns out to be less interesting, with $1$ as answer.
To see the lower bound, set $B=\AF$, in which case $\Length{\MST{B}} = \Length{\MST{\AF}}$ and $\Length{\MST{\AF \setminus B}} = 0$.
The ratio is therefore $1$.
We get the upper bound by constructing a set $\AF$ of $n \geq 2$ points.
It contains the origin, $n-2$ points each at distance at most $\ee > 0$ from the origin, and another point, which we call $b$, at unit distance from the origin.
Assume $b \in B$, and consider the case in which $B$ contains at least one other point of $\AF$.
Then
\begin{align}
    1  &\leq       ~~\,\Length{\MST{\AF}} ~~\, \leq  1 + 2(n-2) \ee , \\
    1 - \ee  &\leq ~~\,\Length{\MST{B}}   ~~\, \leq  1 + 2(n-2) \ee , \\
    0  &\leq    \Length{\MST{\AF \setminus B}} \leq  2(n-3) \ee .
\end{align}
For any given $\dd > 0$, we can choose $\ee > 0$ sufficiently small such that the ratio is smaller than $1 + \dd$.
In the other case, in which $B = \{b\}$, we have $\Length{\MST{B}} = 0$ and $\Length{\MST{\AF \setminus B}} \leq 2(n-2) \ee$, so we can make the ratio arbitrarily small and certainly smaller than $1$.

\section{Two-dimensional Lattices}
\label{sec:3}

Motivated by the triviality of the infimum maximum MST-ratio for general finite sets, we aim for a restriction that disallows extremely unbalanced distributions.
There are many choices, and we opt for a maximally restricted setting in which the MST-ratio is still an interesting question.
Specifically, we focus on $2$-dimensional lattices.
\begin{named}{Definition}
  The \emph{(Euclidean) lattice} spanned by two linearly independent vectors, $\uuu, \vvv \in \Rspace^2$, consists of all integer combinations of these vectors: $\AL (\uuu, \vvv)  =  \{ i \uuu + j \vvv \mid i,j \in \Zspace \}$.
\end{named}
By definition, lattices are infinite.
To cope with the difficulty of constructing the minimum spanning tree of infinitely many points, we take progressively larger but finite portions of a lattice and monitor the sequence of MST-ratios.
Specifically, we fix a partition of the infinite lattice, take rhombi centered at the origin and spanned by the vectors of the shortest basis of the lattice, for each rhombus get the MST-ratio for the points inside the rhombus, and consider the sequence of MST-ratios as the size of the rhombus increases.
If this sequence converges, we call the limit the \emph{MST-ratio} of the chosen partition of the lattice.
\begin{figure}[hbt]
    \centering \vspace{0.0in}
    \resizebox{!}{1.2in}{\input{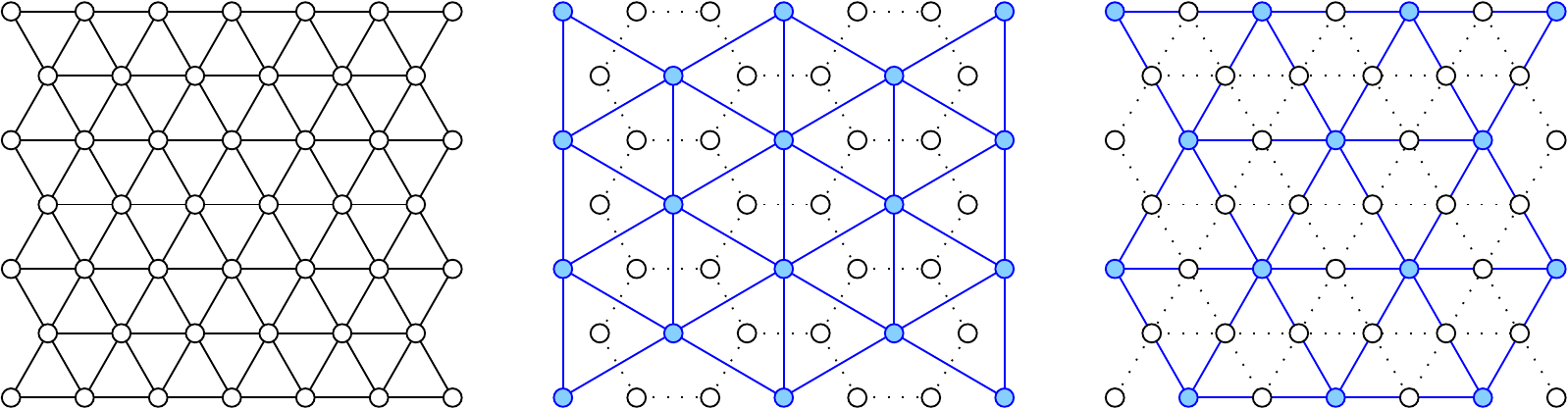_t}}
    \vspace{-0.0in}
    \caption{\emph{Left:} a portion of the hexagonal lattice and all its shortest edges.
    \emph{Middle:} a partition into one and two thirds of the points, with MST-ratio converging to $(2+\sqrt{3})/3 = 1.245\ldots$.
    \emph{Right:} a partition into one and three quarters of the points, with MST-ratio converging to $1.25$.}
    \label{fig:hexagonal}
\end{figure}

A particularly interesting lattice is the \emph{triangular} or \emph{hexagonal lattice},
which is spanned by $\uuu = (1,0)$ and $\vvv = \frac{1}{2} (1, \sqrt{3})$; see the left panel in Figure~\ref{fig:hexagonal}.
The minimum distance between its points is $1$, so all edges of the MST have length $1$.
The two partitions illustrated in the middle and right panels of Figure~\ref{fig:hexagonal} have MST-ratios $1.245\ldots$ and $1.25$, respectively.
In one way or another, we use this lattice to prove all four bounds claimed in the following theorem.
\begin{theorem}
  \label{thm:maximum_MST-ratio_for_lattices}
  The supremum and infimum, over all $2$-dimensional lattices, $\AL$, of the supremum, over all subsets, $B \subseteq \AL$, of the MST-ratio are $\SupSup = \sup_{\AL} \sup_B \MSTratio{\AL}{B} = 2$ and $\InfSup = \inf_{\AL} \sup_B \MSTratio{\AL}{B} = 1.25$.
\end{theorem}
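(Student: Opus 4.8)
The statement bundles four extremal values; the plan is to prove each in turn, using throughout the structural fact that for a lattice $\AL$ with minimum interpoint distance $\delta$, a square (or shortest-basis rhombus) portion with $n$ points satisfies $\Length{\MST{\AL}}=n\delta\,(1+o(1))$. Indeed such a portion is crossed by only $O(\sqrt n)$ maximal chains in the direction of a shortest lattice vector, so $\AL$ can be connected with $n-O(\sqrt n)$ edges of length $\delta$ and $O(\sqrt n)$ longer edges, while every spanning tree needs $n-1$ edges of length at least $\delta$.

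\emph{Supremum $=2$.} For the lower bound, let $\AL$ be spanned by $(1,0)$ and $(0,2)$ and let $B\subseteq\AL$ be the points with even first coordinate. Then $B$ and $\AL\setminus B$ are two translates of $2\Zspace\times2\Zspace$, each of minimum distance $2$ and each meeting a square portion in $n/2$ of its $n$ points, so $\Length{\MST{B}}=\Length{\MST{\AL\setminus B}}=n\,(1+o(1))$ whereas $\Length{\MST{\AL}}=n\,(1+o(1))$, giving $\MSTratio{\AL}{B}\to2$. For the upper bound it suffices to prove $\Length{\MST{B}}+\Length{\MST{\AL\setminus B}}\le2n\delta\,(1+o(1))$ for every $B$; I would bound each sub-MST by $n\delta\,(1+o(1))$ separately, using that a sublattice (or union of cosets) of index $k$ contains a lattice vector of length at most $k\delta$, so (number of its points)$\times$(its minimum distance)$\le n\delta$, plus $o(n)$ for the sublinearly many inter-chain edges. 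The $\Zspace\times2\Zspace$ example shows this is tight.

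\emph{Infimum $=1.25$.} For the lower bound I would use, for an arbitrary lattice $\AL$, the single subset $B=2\AL$. It has minimum distance $2\delta$ and $n/4$ points in a portion, so $\Length{\MST{B}}\ge\tfrac12 n\delta\,(1-o(1))$; and $\AL\setminus B$ has $\tfrac34 n$ points at pairwise distance $\ge\delta$, so $\Length{\MST{\AL\setminus B}}\ge\tfrac34 n\delta\,(1-o(1))$. Dividing the sum by $\Length{\MST{\AL}}=n\delta\,(1+o(1))$ yields $\MSTratio{\AL}{B}\ge\tfrac54\,(1-o(1))$, hence $\InfMax\ge1.25$. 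The reverse inequality is Theorem~\ref{thm:maximum_MST-ratio_of_hexagonal_lattice}: the hexagonal lattice has MST-ratio at most $1.25$. Here $\Length{\MST{\AL}}=n\,(1+o(1))$ exactly, so one must show $\Length{\MST{B}}+\Length{\MST{\AL\setminus B}}\le1.25\,n\,(1+o(1))$ for \emph{every} subset $B$ — the union-of-cosets candidates being maximized by the index-$4$ sublattice, which attains $\tfrac12 n+\tfrac34 n=1.25\,n$.

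This last bound is where I expect the real difficulty. I would attack it with a localized charging argument: cover the hexagonal lattice by small cells and show that the length of $\MST{B}$ together with $\MST{\AL\setminus B}$ chargeable to one cell, expressed through the numbers of $B$- and $(\AL\setminus B)$-points in the cell and its neighbors, never exceeds $1.25$ times the $\MST{\AL}$-length charged there; summing over cells gives the bound. The obstacle is that an adversarial coloring can concentrate long sub-MST edges locally, so the charging scheme (or potential function) must be robust to this while still producing the \emph{exact} constant $1.25$, with the one-in-four pattern as the unique worst case; obtaining the tight constant rather than something slightly larger is likely to force a detailed, case-based analysis of the hexagonal neighborhoods, consistent with the introduction's remark that this is by a wide margin the hardest of the four bounds.
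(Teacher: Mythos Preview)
Your lower bound for $\SupMax$ is correct and in fact cleaner than the paper's: you use the rectangular lattice $\Zspace\times 2\Zspace$ split by parity of the first coordinate, whereas the paper stretches the hexagonal lattice horizontally by a factor $9$ and takes $B$ to be a one-third hexagonal sublattice. Both give $\MSTratio{\AL}{B}\to 2$, but your example avoids the geometric bookkeeping of the stretched picture. Your lower bound for $\InfMax$ via $B=2\AL$ is exactly the paper's argument.

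There is a genuine gap in your upper bound for $\SupMax$. You propose to bound $\Length{\MST{B}}\le n\delta(1+o(1))$ ``using that a sublattice (or union of cosets) of index $k$ contains a lattice vector of length at most $k\delta$''. But $B$ is an \emph{arbitrary} subset of $\AL$, not a sublattice or union of cosets, so this reasoning does not apply; and even for sublattices, the product (number of points)$\times$(minimum distance) is a \emph{lower} bound on MST length, not an upper bound. The paper's argument is close to your ``inter-chain'' remark but makes it work for arbitrary $B$: take the $n{+}1$ rows of $\AL_n$ parallel to the shortest basis vector, connect the points of $B$ within each row (total cost at most $(n{+}1)n$ since each row has length $n$), pick one $B$-point per non-empty row, and connect these at most $n{+}1$ representatives by their own Euclidean MST, whose length is $O(n^{3/2})$ by the elementary bound that any $m$ points in an $m\times m$ square have MST length $O(m^{3/2})$. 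This yields $\Length{\MST{B}}\le n^2+O(n^{3/2})$ uniformly over all $B$, and the same for $\AL\setminus B$.

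For the upper bound on $\InfMax$, your charging-over-cells plan has the right flavor but misses the mechanism. The paper does not find a single local inequality per cell; instead it replaces Euclidean by hexagonal distance, builds a hierarchy of thickenings $\Thicken{k}{B}$ whose connected pieces are organized into \emph{rooms}, \emph{houses}, \emph{blocks}, and \emph{compounds}, attaches auxiliary \emph{satellite} points to compounds to manufacture extra short edges in $\BL\setminus B$, and then runs a global credit/cost accounting in which each short edge contributes credit $\tfrac14$ and each long edge of $\hexMST{B}$ incurs cost equal to its Euclidean length minus $\tfrac54$. The tight constant $1.25$ emerges from comparing these credits against costs level by level through the hierarchy (Lemmas~\ref{lem:number_of_backyards}--\ref{lem:total_Euclidean_length_hex}); your intuition that a ``detailed, case-based analysis of hexagonal neighborhoods'' is needed is accurate, but the hierarchical organization of that analysis is the key structural idea you are missing.
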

Each of the subsequent subsections restates and proves one of the four bounds, except for the last subsection, which only sketches the proof strategy, with the proof presented in Section~\ref{sec:4}.

\subsection{Lower Bound for Sup-Sup}
\label{sec:3.1}

This subsection exhibits a lattice, and a partition of this lattice into two sets, such that the MST-ratio of progressively larger finite portions of the lattice approaches the supremum of the supremum MST-ratio claimed in Theorem~\ref{thm:maximum_MST-ratio_for_lattices} from below.
\begin{claim}
  \label{clm:lower_bound_for_supmax}
  $\SupSup \geq 2$.
\end{claim}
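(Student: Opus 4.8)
The plan is to exhibit a single ``thin'' rectangular lattice together with an explicit bipartition of its points, whose square (and rhombic) neighborhoods have MST-ratio tending to $2.0$ from below. Concretely, take $\AL = \AL(\uuu,\vvv)$ with $\uuu = (1,0)$ and $\vvv = (0,2)$, so that $\AL = \Zspace \times 2\Zspace$; let $B \subseteq \AL$ be the points with even first coordinate (thus $B = 2\Zspace \times 2\Zspace$), and let $\AL \setminus B$ be the points with odd first coordinate. The guiding intuition is that $\AL$ has minimum distance $1$ and is so elongated that its minimum spanning tree is built almost entirely from unit horizontal edges, whereas $B$ and $\AL \setminus B$ are each sublattices of minimum distance $2$, so \emph{every} edge of a spanning tree of either color class has length at least $2$. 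As the two color classes together exhaust all $n$ points of a neighborhood, their two trees use about $n$ edges of length $2$ in total -- combined length near $2n$ -- against length near $n$ for $\MST{\AL}$ of the same neighborhood.

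To make this rigorous I would argue over the square $[-L,L]^2$; the rhombus spanned by the shortest basis, which here is $\{(1,0),(0,2)\}$, leads to the same computation. Write $c$ and $r$ for the numbers of occupied columns and rows and $n = cr$ for the number of points of $\AL$ in the square. The only difference vectors of $\AL$ of length below $2$ are the $\pm(1,0)$, so the subgraph of sub-$2$ edges consists of the $r$ rows; since a minimum spanning tree must contain at least $k-1$ edges of length $\geq t$ whenever the subgraph of edges shorter than $t$ has $k$ components, any spanning tree uses at least $r-1$ further edges, each of length $\geq 2$, and a ``comb'' realizes this with $r-1$ vertical edges of length $2$, so $\Length{\MST{\AL \cap [-L,L]^2}} = (n-r) + 2(r-1) = n+r-2$. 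On the other hand, $B \cap [-L,L]^2$ is an axis-aligned square grid of points at spacing $2$, hence connected through edges of length $2$, so it admits a spanning tree all of whose edges have length $2$, and none shorter can exist since the minimum distance in $B$ is $2$; thus $\Length{\MST{B \cap [-L,L]^2}} = 2(\card{(B \cap [-L,L]^2)} - 1)$, and likewise for $\AL \setminus B$. Summing and using $\card{(B \cap [-L,L]^2)} + \card{((\AL \setminus B) \cap [-L,L]^2)} = n$ gives combined length exactly $2n-4$, so the MST-ratio of this square neighborhood equals $\frac{2n-4}{n+r-2} < 2$.

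Finally, the square $[-L,L]^2$ meets $\AL$ in about $L$ rows and about $2L$ columns, so $n$ is of order $L^2$ while $r$ is of order $L$; hence $\frac{2n-4}{n+r-2} \to 2$ from below as $L \to \infty$. Therefore $\MSTratio{\AL}{B} = 2.0$, so $\max_B \MSTratio{\AL}{B} \geq 2.0$, and thus $\SupMax \geq 2.0$.

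I do not expect a real obstacle: the whole argument is elementary. The only steps needing care are the two minimum-spanning-tree identities, which come down to (i) $\Zspace \times 2\Zspace$ having no vector of length in the open interval $(1,2)$, so that merging two rows always costs at least $2$, and (ii) the monochromatic sublattices restricted to a box being connected using only edges of length $2$. One should also note that the rhombic neighborhoods give the same limit, so that the MST-ratio of this lattice is genuinely well defined.
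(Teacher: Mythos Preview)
Your argument is correct, and it takes a genuinely different route from the paper. The paper proves the bound using the hexagonal lattice stretched horizontally by a factor~$9$, with $B$ equal to one third of the points forming a hexagonal sublattice of minimum distance~$3\sqrt{3}$; the resulting MST computations involve column lengths~$\sqrt{3}$ and inter-column bridges of length~$\sqrt{21}$, and the ratio $(3\sqrt{3}\,m_r + \sqrt{3}\,n_r + \ldots)/(\sqrt{3}\,n_r + \ldots) \to 2$ after showing $3m_r = n_r + o(n_r)$. Your rectangular lattice $\Zspace \times 2\Zspace$ with the even/odd column bipartition is considerably cleaner: every relevant edge has length exactly~$1$ or~$2$, the two MST lengths $n+r-2$ and $2n-4$ are obtained by one-line counting arguments, and both color classes are themselves square sublattices of spacing~$2$, so no asymptotic estimate beyond $r = o(n)$ is needed. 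The paper's choice ties into the hexagonal-lattice theme running through the rest of the section, but for the bare inequality $\SupMax \geq 2.0$ your construction is more economical and just as rigorous.
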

\begin{proof}
  Let $\AL$ be the hexagonal lattice horizontally stretched by a factor $9$, and let $B \subseteq \AL$ be the one third of the points drawn blue in Figure~\ref{fig:stretched}.
  The (vertical) distance between neighboring points in a column of $\AL$ is $\sqrt{3}$, and the (horizontal) distance between two neighboring columns is $\frac{9}{2}$.
  \begin{figure}[hbt]
    \centering \vspace{0.0in}
    \resizebox{!}{2.0in}{\input{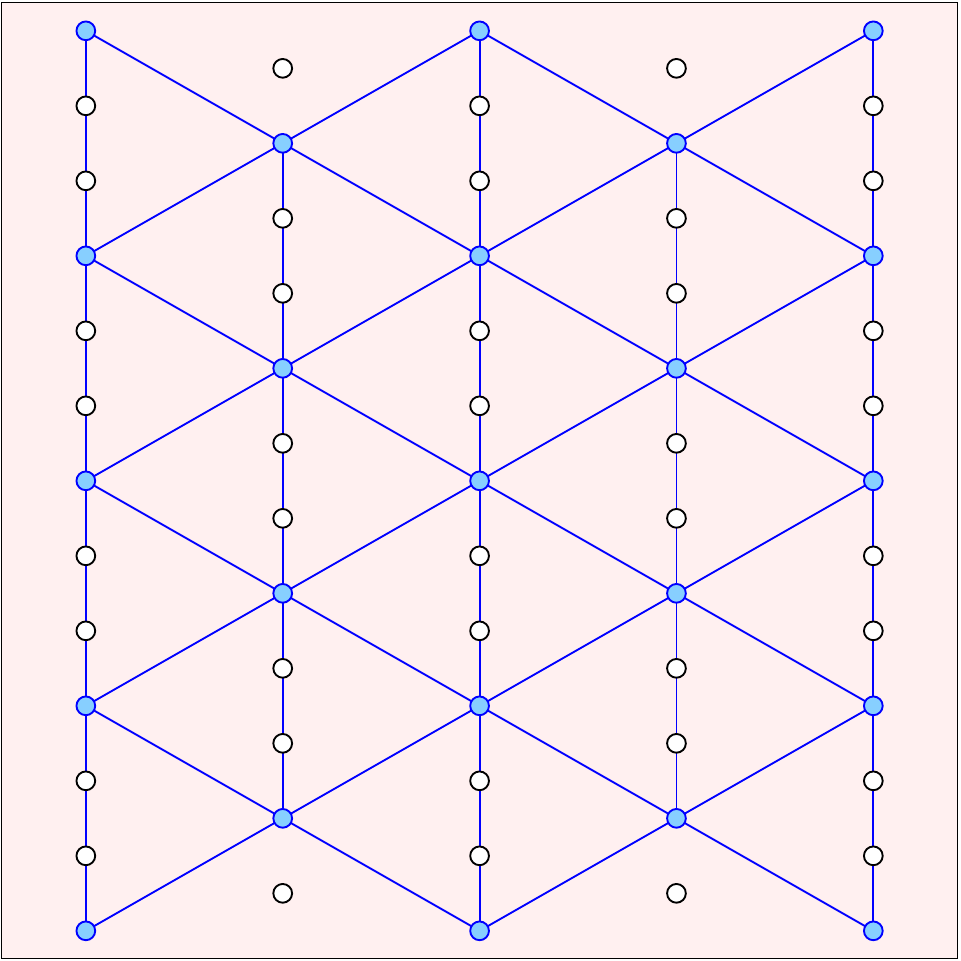_t}}
    \vspace{-0.0in}
    \caption{The portion of the horizontally stretched hexagonal lattice, $\AL$, and the subset of \emph{blue} points, $B$, inside a square centered at the origin.
    The edges show the union of all possible minimum spanning trees of the \emph{blue} points.}
    \label{fig:stretched}
  \end{figure}
  For each $r \geq 0$, let $\AL_r \subseteq \AL$ and $B_r \subseteq B$ be the points in $[-r, r]^2$.
  Hence, $\AL_r$ consists of $p_r = 2 \lfloor {2r / 9} \rfloor + 1$ vertical columns, which alternate between $q_r = 2 \lfloor r/\sqrt{3} \rfloor + 1$ and $q_r - 1$ or $q_r + 1$ points.
  Observe that $p_r$ and $q_r$ are both odd, and that $n_r = q_r p_r \pm (p_r-1)/2$ is the cardinality of $\AL_r$.
  The number of points of $B_r$ in the columns alternates between $b_r = 2 \lfloor r/(3 \sqrt{3}) \rfloor + 1$ and $b_r - 1$ or $b_r + 1$, so $m_r = b_r p_r  \pm (p_r-1)$ is the cardinality of $B_r$.
  It is easy to see that $n_r - 2p_r \leq 3m_r \leq n_r + 2p_r$.

  \smallskip
  By choice of the stretch factor, $B$ is a hexagonal lattice with distance $3 \sqrt{3}$ between closest points.
  Hence, $\Length{\MST{B_r}} = 3 \sqrt{3} (m_r - 1)$.
  Compare this with a minimum spanning tree of $\AL_r$, which first connects the points in each column and second connects neighboring columns with one edge for each pair.
  Hence,
  \begin{align}
    \Length{\MST{\AL_r}} &= \sqrt{3} (n_r-p_r) + \sqrt{21} (p_r-1) ,
  \end{align}
  because every point, except the last in each column, has a neighbor at distance $\sqrt{3}$ below, and any two neighboring columns have points at distance $\sqrt{21}$ from each other.
  Similarly, any minimum spanning tree of $\AL_r \setminus B_r$ first connects the points in each column and second connects neighboring columns with one edge for each pair.
  Its length is therefore the same as that of $\MST{\AL_r}$.
  Using $3m_r = n_r + o(n_r)$, this implies
  \begin{align}
    \frac{\Length{\MST{B_r}} + \Length{\MST{\AL_r \setminus B_r}}}
         {\Length{\MST{\AL_r}}}
    &= \frac{3\sqrt{3}(m_r-1) + \sqrt{3}(n_r-p_r) + \sqrt{21}(p_r-1)}
            {\sqrt{3}(n_r-p_r) + \sqrt{21}(p_r-1)} \\
    &= \frac{2\sqrt{3}n_r + o(n_r)}
            {\sqrt{3}n_r + o(n_r)}
    \stackrel{r \to \infty}{\longrightarrow} 2 .
  \end{align}
  For any $\ee > 0$, we can choose $r$ sufficiently large such that the MST-ratio exceeds $2 - \ee$, which implies the claimed lower bound.
\end{proof}

\subsection{Upper Bound for Sup-Sup}
\label{sec:3.2}

This subsection proves the upper bound that matched the lower bound established in the preceding subsection.
Given any lattice and any partition of this lattice into two sets, we show that for any $\ee > 0$, the MST-ratio cannot exceed $2 + \ee$.
We begin with a bound for the length of the minimum spanning tree of any finite set in a square.
\begin{lemma}
  \label{lem:mst_insquare_upperbound}
  The length of the minimum spanning tree of any $n$ or fewer points in $[0,n]^2$ is at most $2 n \sqrt{n}$. 
\end{lemma}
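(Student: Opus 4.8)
The plan is to bound the MST by a divide-and-conquer / space-filling argument on a grid. First I would overlay $[0,n]^2$ with a grid of axis-parallel cells of side length $s$, to be chosen later, giving roughly $(n/s)^2$ cells. Within each cell, any set of points it contains has diameter at most $s\sqrt{2}$, so if a cell contains $k_i$ of the points, a spanning path through just those points has length at most $(k_i - 1)\cdot s\sqrt{2} \le k_i s \sqrt{2}$. Summing over all cells, the total length of these intra-cell paths is at most $s\sqrt{2}\sum_i k_i = s\sqrt{2}\, n$, since the $k_i$ sum to the number of points, which is at most $n$.

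Next I would stitch the per-cell paths together into a single spanning tree of all the points by adding, for each nonempty cell, one edge linking it to an adjacent nonempty cell — more carefully, build an auxiliary connected graph on the (at most $(n/s)^2$) cells, e.g.\ along a boustrophedon (snake) traversal of the grid, and for each step of the traversal add one edge between a point in the current cell and a point in the next nonempty cell. Because consecutive cells in a snake order are either adjacent or, when skipping empty cells, still connectable by a short path, each connecting edge can be charged a length of $O(s)$ provided we route the snake so that the jump between two successive nonempty cells is bounded; in the worst case a connecting edge within a row spans at most $n$ horizontally, but there are at most $n/s$ rows, so the total cost of row-to-row connectors is $O(n)$ and the in-row connectors cost at most $(n/s)^2 \cdot s = n^2/s$ in the crudest bound — I would instead use that there are at most $n$ points total, hence at most $n$ nonempty cells actually visited, so at most $n$ connecting edges, each of length at most $s\sqrt2$ if we only ever connect to a grid-adjacent nonempty cell and otherwise absorb the jump into the count; the clean statement is that the connectors contribute at most $C\,n\,s$ for an absolute constant $C$.

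Then I would optimize $s$. The two contributions are of the form $\alpha n s$ (intra-cell plus short connectors) and $\beta n^2 / s$ (the cost of long jumps across empty regions, bounded crudely by the number of rows times the side length). Balancing gives $s \asymp \sqrt{n}$, and with $s = \sqrt{n}$ the grid has about $\sqrt n \times \sqrt n = n$ cells, the intra-cell cost is $\sqrt2\, n \cdot \sqrt n = \sqrt2\, n\sqrt n$, and the connector cost is likewise $O(n\sqrt n)$; with a careful accounting one pushes the constant down to $2$, yielding the bound $2n\sqrt n$. The main obstacle is exactly this constant-chasing: the naive grid argument gives $O(n\sqrt n)$ easily, but getting the explicit factor $2$ requires either a sharper traversal of the cells (so that nonempty cells are visited in an order where each step moves to a neighbor, at cost $\le s$, giving total connector length $\le n s = n\sqrt n$) or a slightly cleverer partition (e.g.\ using strips of width $\sqrt n$ and sorting within each strip, so that the tree inside a strip of $k$ points has length at most $k\sqrt n$ along the strip plus the width, and there are $\sqrt n$ strips contributing $\sqrt n \cdot \sqrt n = n$ for the vertical connectors). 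I would write it with the strip decomposition: partition $[0,n]^2$ into $\lceil\sqrt n\rceil$ vertical strips of width $\sqrt n$; inside each strip connect its points in order of $y$-coordinate, costing at most (number of points in strip)$\cdot$(strip diameter bound), which telescopes across strips to at most $n \cdot \sqrt n \cdot(1 + o(1))$ for a suitable per-strip bound, then add $\lceil \sqrt n\rceil$ edges of length at most $n$... no — add $\lceil\sqrt n \rceil - 1$ connectors each of length at most $\sqrt n \cdot \sqrt 2$ by linking consecutive strips near a common horizontal level, contributing at most $\sqrt 2\, n$, which is lower order. Summing the dominant term $\le n\sqrt n \cdot \sqrt 2 < 2 n \sqrt n$ finishes the proof.
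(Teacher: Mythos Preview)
Your approach is genuinely different from the paper's, and the general strategy (strip decomposition, boustrophedon tour) is a classical route to an $O(n\sqrt{n})$ bound on the shortest spanning path in a square. But as written the argument does not secure the constant $2$, and a couple of the explicit claims are wrong. First, inside a strip of width $\sqrt{n}$ and height $n$, connecting the $k_i$ points in $y$-order costs at most $\sum_j \sqrt{(\Delta y_j)^2 + n} \le n + (k_i-1)\sqrt{n}$; summed over all strips this is already essentially $2n\sqrt{n}$, not the ``$n\sqrt{n}(1+o(1))$'' you assert, so there is no slack left for connectors. Second, the claim that consecutive nonempty strips can be linked ``near a common horizontal level'' by an edge of length at most $\sqrt{2n}$ is unjustified: the points in adjacent strips can lie at opposite ends of the $y$-range. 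A correct version of this argument (essentially Few's theorem) does yield roughly $\sqrt{2}\,n\sqrt{n} + O(n)$, which beats $2n\sqrt{n}$ for large $n$, but then you would still owe a separate check for small $n$, and your write-up does not get there.

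The paper bypasses all of this with a two-line argument you may find worth knowing: Gilbert and Pollak proved that for any minimum spanning tree on points in $[0,n]^2$ the sum of \emph{squared} edge lengths satisfies $\sum_i \ell_i^2 \le 4n^2$. With $n-1$ edges, Cauchy--Schwarz then gives
\[
  \sum_{i=1}^{n-1} \ell_i \;\le\; \sqrt{(n-1)\sum_i \ell_i^2} \;\le\; \sqrt{(n-1)\cdot 4n^2} \;=\; 2n\sqrt{n-1} \;\le\; 2n\sqrt{n},
\]
which is the desired bound for all $n$ at once. Your constructive route would, if carried out carefully, give a slightly better asymptotic constant but at the cost of a much longer proof; the paper's route trades that for a one-line citation plus Cauchy--Schwarz.
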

\begin{proof}
  Assuming the number of points is $k \leq n$, the minimum spanning tree has $k-1$ edges, and we write $\ell_1, \ell_2, \ldots, \ell_{k-1}$ for their lengths.
  The sum of the squares of these lengths is at most $4n^2$, as proved in \cite{GiPo68}.
  By the Cauchy--Schwarz inequality, the sum of the $\ell_i$ is maximized when all terms are the same, namely $\ell_i^2 = 4n^2/(k-1)$ for all $i$.
  This implies 
  \begin{align}
    \sum\nolimits_{i=1}^{k-1} \ell_i
      &\leq (k-1) \sqrt{4n^2 / (k-1)}
       = 2n \sqrt{k-1} ,
  \end{align}
  from which the claimed bound follows.
\end{proof}
\Skip{As shown in \cite{AAABBCILSTY13}, the bound on the sum of squared edge lengths can be improved to $3.411\ldots n^2$.}
Lemma~\ref{lem:mst_insquare_upperbound} will provide a crucial step in the proof of the upper bound for the supremum maximum MST-ratio, which we present next.
\begin{claim}
\label{clm:upper_bound_for_supmax}
  $\SupSup \leq 2$.
\end{claim}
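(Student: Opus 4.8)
The plan is to show that for an arbitrary $2$-dimensional lattice $\AL$ and an arbitrary subset $B \subseteq \AL$, the MST-ratio of the finite portions $\AL_r$ (either the square $[-r,r]^2$ centered at the origin, or the rhombus spanned by the shortest basis, scaled by $r$) satisfies $\MSTratio{\AL_r}{B_r} \leq 2 + o(1)$ as $r \to \infty$. Write $\lambda$ for the length of the shortest vector of $\AL$, so every edge of $\MST{\AL_r}$ has length at least $\lambda$, and a short argument (connecting each point to a lattice-neighbor) shows $\Length{\MST{\AL_r}} \geq \lambda \cdot (n_r - 1)$ where $n_r = \card{\AL_r}$; in fact $\Length{\MST{\AL_r}} = \Theta(n_r)$ with the implied constants depending only on $\AL$. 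So it suffices to prove that $\Length{\MST{B_r}} + \Length{\MST{\AL_r \setminus B_r}} \leq \lambda (n_r - 1) + o(n_r)$, or more crudely that each of the two trees on $B_r$ and its complement has length at most $\bigl(\tfrac12 + o(1)\bigr) \lambda n_r$ — no, that is too strong in general, so instead the correct target is the combined bound $\Length{\MST{B_r}} + \Length{\MST{\AL_r \setminus B_r}} \leq (1+o(1)) \Length{\MST{\AL_r}}$, which after dividing gives the ratio bound $2 + o(1)$.

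The heart of the argument is a \emph{space-partitioning / grid} scheme that lets Lemma~\ref{lem:mst_insquare_upperbound} do the work. First I would rescale so that the shortest lattice vector has length $1$; then $\AL_r$ is contained in a square of side $O(r)$ and contains $n_r = \Theta(r^2)$ points, with the point density bounded above and below by constants depending only on $\AL$ (determinant and covering radius). Partition the ambient square into a regular $k \times k$ grid of subsquares, each of side length $s = O(r/k)$, and let $S_1, \dots, S_{k^2}$ be the subsquares. For a fixed color class, say $B_r$, build a spanning tree by (i) taking, inside each subsquare $S_j$, a spanning tree of $B_r \cap S_j$, and (ii) joining these $k^2$ partial trees into one by adding $k^2 - 1$ connector edges, each of length $O(s \cdot k) = O(r)$ (e.g. along a space-filling or row-by-row order of the subsquares, each connector spanning at most a bounded number of subsquares). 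Choosing $k = k(r) \to \infty$ slowly — say $k = \lfloor \log r \rfloor$ — the total connector length is $O(k^2 \cdot r) = o(r^2) = o(n_r)$, so it is negligible. Inside each subsquare, $B_r \cap S_j$ has at most $c\, s^2$ points for a constant $c$; scaling Lemma~\ref{lem:mst_insquare_upperbound} (applied with a square of side $\max(cs^2, \text{point count})$, or more carefully: $N$ points in $[0,\ell]^2$ have MST length at most $2\ell\sqrt{N}$ by the same Cauchy–Schwarz argument on the $O(\ell^2)$ bound for the sum of squared edge lengths) gives $\Length{\MST{B_r \cap S_j}} \leq 2 s \sqrt{c}\, s = 2\sqrt{c}\, s^2$, i.e. at most a constant times the \emph{area} of $S_j$. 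Summing over the $k^2$ subsquares, $\sum_j \Length{\MST{B_r \cap S_j}} \leq 2\sqrt{c} \sum_j s^2 = 2\sqrt{c} \cdot O(r^2)$. The same bound holds for $\AL_r \setminus B_r$. Hence both $\Length{\MST{B_r}}$ and $\Length{\MST{\AL_r \setminus B_r}}$ are $O(r^2) = O(n_r)$, matching the order of $\Length{\MST{\AL_r}}$, but this only gives a ratio bounded by \emph{some} constant — not $2$.

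To get the sharp constant $2$, I would refine step (i): instead of bounding the per-subsquare MST of a color class by its area, relate it to the MST of the \emph{full} lattice restricted to that subsquare. Inside each subsquare $S_j$, both $\MST{B_r \cap S_j}$ and $\MST{(\AL_r \setminus B_r) \cap S_j}$ are spanning trees of subsets of the point set $\AL \cap S_j$; the key local inequality to establish is $\Length{\MST{X \cap S_j}} + \Length{\MST{Y \cap S_j}} \leq \Length{\MST{(\AL\cap S_j)}} + o(\text{area})$ for a partition $X \sqcup Y$ of $\AL \cap S_j$ — but that is exactly the finite-set statement with ratio close to $2$, which is false in general (Theorem~\ref{thm:maximum_MST-ratio_for_finite_sets} only gives $2.427$). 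So the grid must instead be used the other way: fix the subsquare \emph{side length} $s$ to be a large \emph{constant} $L$ (independent of $r$), so that there are $\Theta(r^2/L^2)$ subsquares each with $\Theta(L^2)$ lattice points; within a subsquare use a direct argument for lattices (every point of $\AL$ has a same-direction lattice neighbor, so a color class $B \cap S_j$ can be spanned by following, for each point, a short walk through $\AL \cap S_j$ to the nearest point of the \emph{same} color, of expected length $O(1)$ times the local spacing — and since consecutive points along lattice rows alternate membership in $B$ in the worst case, a monochromatic path costs at most twice the bichromatic path). Making this precise is the main obstacle: one must argue that, up to lower-order terms, replacing $\MST{\AL_r}$'s row-and-column structure by a monochromatic version at most doubles each edge, uniformly over all partitions $B$. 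I would handle it by taking $\MST{\AL_r}$ itself, subdividing each of its edges, and routing around: for each edge $uv$ of $\MST{\AL_r}$ with $u,v$ of the same color, keep a path of length $\leq 2\|u-v\|$ within that color through $\AL_r$; this yields connected monochromatic subgraphs whose total length is at most $2\Length{\MST{\AL_r}} + (\text{connector terms that are } o(n_r) \text{ from the grid})$, and since $\MST{B_r}$ and $\MST{\AL_r\setminus B_r}$ are no longer than those subgraphs, dividing by $\Length{\MST{\AL_r}}$ gives the bound $2 + o(1)$. Letting $r \to \infty$ and then noting this holds for every $\ee > 0$ yields $\MSTratio{\AL}{B} \leq 2$ for all $\AL$ and all $B$, hence $\SupMax \leq 2.0$.
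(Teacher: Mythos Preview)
Your proposal has a genuine gap at exactly the point you flag as ``the main obstacle.'' You correctly recognize that the grid scheme with $k\to\infty$ only yields \emph{some} constant ratio, and you briefly mention the right idea --- organizing the lattice into rows parallel to the shortest basis vector and noting that ``a monochromatic path costs at most twice the bichromatic path'' along a row --- but you then abandon it for a routing argument that is not made precise. In that final paragraph you only treat edges $uv$ of $\MST{\AL_r}$ with $u,v$ of the \emph{same} color (where the edge itself already lies in that color's graph and no factor~$2$ is needed), and you say nothing about bichromatic edges, which are the entire difficulty. The assertion that ``replacing $\MST{\AL_r}$'s row-and-column structure by a monochromatic version at most doubles each edge'' is the claim to be proved, not an argument; and even granting some local doubling, you have not argued that the resulting monochromatic subgraphs are \emph{connected}.

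The paper's proof simply executes the row idea you glimpsed. With $\uuu,\vvv$ a shortest basis and $\norm{\uuu}=1$, take $\AL_n=\{\alpha\uuu+\beta\vvv : 0\le\alpha,\beta\le n\}$, so $\Length{\MST{\AL_n}}=(n+1)n+n\nu$. The key elementary fact is that \emph{any} subset of a row of $n+1$ unit-spaced collinear points has MST length at most $n$. Hence connecting $B_n$ within each row costs at most $(n+1)n$, and the same holds for $\AL_n\setminus B_n$; this is where the factor~$2$ comes from, with no case analysis on the coloring. What remains is to link the per-row pieces of each color into a single tree: choose one representative per nonempty row (at most $n+1$ points lying in a square of side $O(n)$) and apply Lemma~\ref{lem:mst_insquare_upperbound} to bound that connector MST by $O(n^{3/2})=o(n^2)$. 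Thus $\Length{\MST{B_n}}+\Length{\MST{\AL_n\setminus B_n}}\le 2(n+1)n + O(n^{3/2})$, and dividing by $\Length{\MST{\AL_n}}\sim n^2$ gives $2+o(1)$. No grid, no recursion, and Lemma~\ref{lem:mst_insquare_upperbound} is used once, on $O(n)$ points rather than on the full set.
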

\begin{proof}
  We show that the MST-ratio of any lattice $\AL \subseteq \Rspace^2$ and any subset $B \subseteq \AL$ is at most the claimed upper bound.
  Let $\uuu$ be the shortest non-zero vector in $\Lambda$, and $\vvv$ the shortest non-zero vector that is not a multiple of $\uuu$,
  breaking ties arbitrarily if necessary.
  Suppose their lengths satisfy $1 = \norm{\uuu} \leq \norm{\vvv} = \nu$.
  To simplify language, we call the points on a line parallel to $\uuu$ a \emph{row} of $\AL$.
  For every positive integer, $n$, let $\AL_n \subseteq \AL$ contain all points $\alpha \uuu + \beta \vvv$, with $0 \leq \alpha, \beta \leq n$.
  The minimum spanning tree of $\AL_n$ first connects the points in each row and then the neighboring rows, so
  \begin{align}
    \Length{\MST{\AL_n}} &= (n+1)n + n \nu .
  \end{align}
  Set $B_n = B \cap \AL_n$.
  We construct a spanning tree, $T(B_n)$, by first connecting the points within the rows.
  This allows for the possibility that some rows do not contain any points of $B_n$.
  In each of the other rows, we choose an arbitrary but fixed point of $B_n$, write $B_n' \subseteq B_n$ for the chosen points, construct $\MST{B_n'}$, and add its edges to $T(B_n)$.
  Since $T(B_n)$ spans $B_n$ but is not necessarily the shortest such tree, so $\Length{\MST{B_n}} \leq \Length{T(B_n)}$.
  To bound the latter, recall that there are $n+1$ rows, each of length at most $n$.
  Furthermore, $B_n'$ consists of at most $n+1$ points that fit inside a square of side length $n (\nu+1)$, in which $\nu$ is independent of $n$.
  Lemma~\ref{lem:mst_insquare_upperbound} implies $\Length{\MST{B_n'}} \leq 2 (\nu+1) \sqrt{\nu+1} \cdot n \sqrt{n}$.
  Hence,
  \begin{align}
    \Length{\MST{B_n}} &\leq (n+1)n + 2(\nu+1)\sqrt{\nu+1} \cdot n\sqrt{n} .
  \end{align}
  By symmetry, we have the same upper bound for the length of $\MST{\AL_n \setminus B_n}$.
  Comparing this with the minimum spanning tree of $\AL_n$, we get
  \begin{align}
    \frac{\Length{\MST{B_n}} + \Length{\MST{\AL_n\setminus B_n}}}{\Length{\MST{\AL_n}}} 
    \leq \frac{2n^2+2n + 4 (\nu+1)^{3/2} \cdot n\sqrt{n}}
              {n^2 + n + \nu n}
    \stackrel{n \to \infty}{\longrightarrow} 2.
    \label{eqn:mst-ratio-upperbound}
  \end{align}
  For every $\ee > 0$, we can choose $n$ large enough so that the MST-ratio is less than $2 + \ee$.
  This works for every lattice and partition, which implies the claimed upper bound.
\end{proof}

\subsection{Lower Bound for Inf-Sup}
\label{sec:3.3}

This subsection establishes the lower bound for the infimum, over all lattices, of the supremum MST-ratio.
We do this by establishing a partition into one and three quarters that can be defined for any lattice and has MST-ratio at least as large as claimed in Theorem~\ref{thm:maximum_MST-ratio_for_lattices}.
\begin{claim}
  \label{clm:lower_bound_for_infmax}
  $\InfSup \geq 1.25$.
\end{claim}
\begin{proof}
  Let $\uuu$ and $\vvv$ be two vectors spanning $\AL$, and let $B$ be the sublattice spanned by $2 \uuu$ and $2 \vvv$.
  Assuming the minimum distance between two points in $\AL$ is $1$, most edges of $\MST{\AL}$ have length $1$, while most edges of $\MST{B}$ have length $2$.
  Write $\AL_n \subseteq \AL$ for the points $i\uuu+j\vvv$, with $-2n \leq i,j \leq 2n+1$, and $B_n \subseteq \AL_n$ for the points with even $i$ and $j$.
  Since $B_n$ contains only a quarter of the points, this implies $\lim_{n \to \infty} \Length{\MST{B_n}} / \Length{\MST{\AL_n}} = \frac{1}{2}$.
  The complement of $B_n$ contains three quarters of the points, and the edges in its minimum spanning tree have length at least $1$, which implies $\lim_{n \to \infty} \Length{\MST{\AL_n \setminus B_n}} / \Length{\MST{\AL_n}} \geq \frac{3}{4}$.
  Hence, the MST-ratio of $B \subseteq \AL$ is at least $\frac{1}{2} + \frac{3}{4} = 1.25$.
\end{proof}

\subsection{Upper Bound for Inf-Sup}
\label{sec:3.4}

The upper bound for the infimum of the supremum MST-ratio will be proved in Section~\ref{sec:4}.
This proof is carefully constructed from a network of inequalities that require attention to detail.
This subsection makes an argument why it is not unreasonable to believe that significant short-cuts may be difficult to find.

\begin{figure}[hbt]
    \centering \vspace{0.05in}
    \resizebox{!}{3.4in}{\input{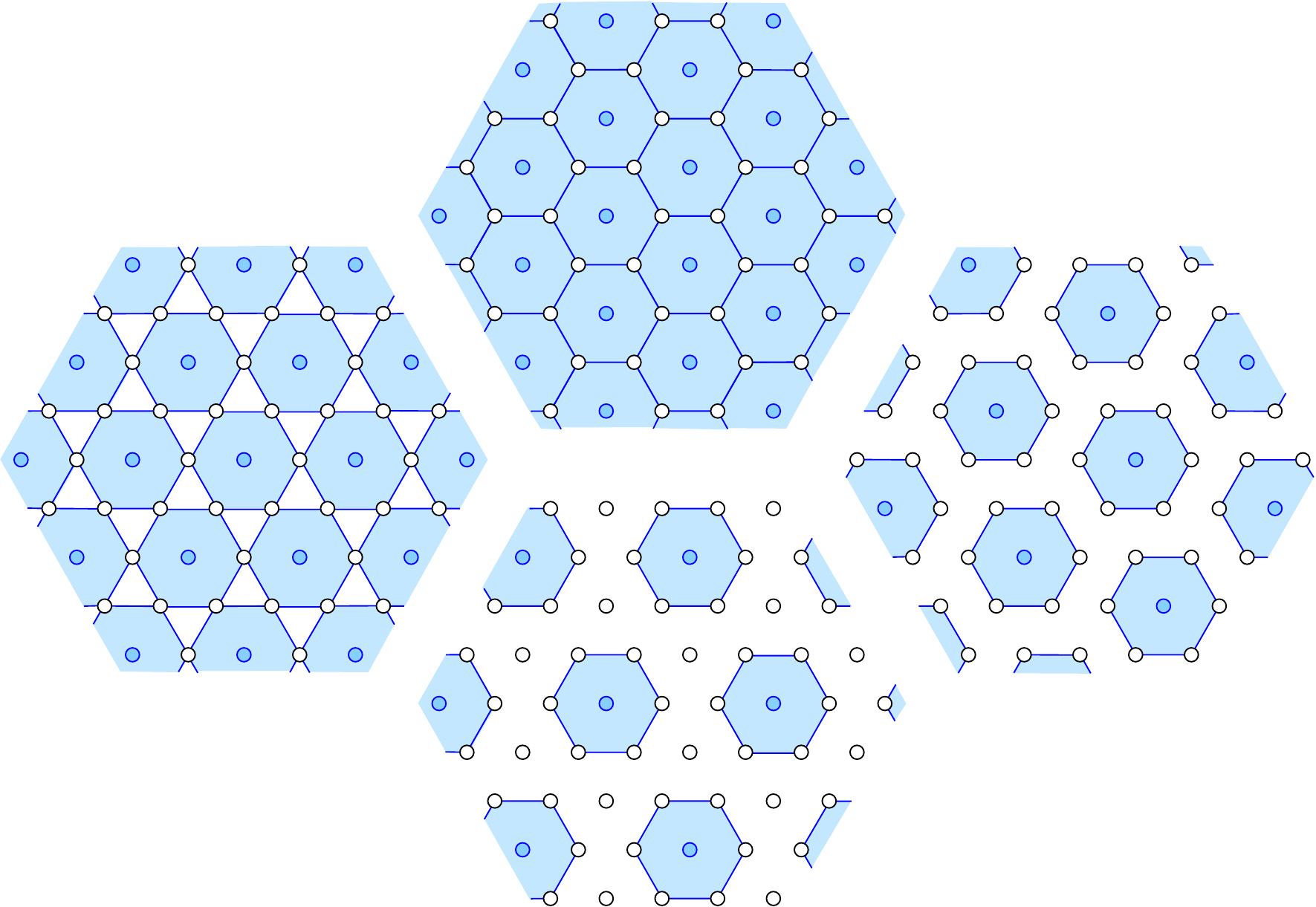_t}}
    \vspace{-0.0in}
    \caption{Four partitions of the hexagonal lattice into two sets, in which we draw each \emph{(blue)} point of the smaller set with its hexagonal neighborhood.
    The proportions of \emph{blue} versus \emph{white} points are
    $1:2$ in the \emph{upper middle}, $1:3$ on the \emph{left}, $1:6$ on the \emph{right}, and $1:8$ in the \emph{lower middle}.
    The corresponding MST-ratios are approximately $1.245$, $1.25$, $1.236$, and $1.222$, in this sequence.}
    \label{fig:fourpartitions}
\end{figure}
The lattice that is most resistant to large MST-ratios is the hexagonal lattice, $\AL$, of which four different subsets, $B \subseteq \AL$, are illustrated as packings of hexagonal neighborhoods in Figure~\ref{fig:fourpartitions}.
Starting at the upper middle, then left, then right, and finally the lower middle, the density of the packing decreases monotonically as the minimum distance between points of $B$ increases from $\sqrt{3}$ to $2$, to $\sqrt{7}$, and finally to $3$.
Correspondingly, $B$ contains one third, one quarter, one seventh, and one ninth of the points.
Perhaps surprisingly, the MST-ratio does not vary monotonically and attains the largest value for the subset $B$ that contains one quarter of the points.
The purpose of Section~\ref{sec:4} is to prove that no other subset of $\AL$ achieves a larger MST-ratio; that is: $1.25$ is the supremum MST-ratio of the hexagonal lattice.
\begin{claim}
  \label{clm:upper_bound_for_infmax}
  $\InfSup \leq 1.25$.
\end{claim}
Because the value matches the lower bound stated in Claim~\ref{clm:lower_bound_for_infmax}, this implies that $1.25$ is indeed the infimum, over all $2$-dimensional lattices, of the supremum MST-ratio.
Prior to studying the hexagonal lattice, the authors of this paper proved that the supremum MST-ratio of the integer lattice is $\sqrt{2}$---which happens to match the ratio found for random sets \cite{DPT23}---and the optimizing subset are the points whose coordinates add up to even integers.
The proof is similar to the one for the hexagonal lattice presented in Section~\ref{sec:4}, and almost as long.
If instead we consider the points whose coordinates add up to odd integers, we get the same MST-ratio, so the integer lattice has at least two globally optimal partitions that are far from each other if the difference is measured in terms of the color changes needed to turn one into the other.
Similarly, the hexagonal lattice has at least four globally optimal partitions, and moving from one to the other (by flipping colors) means walking a path along which the MST-ratio is sometimes barely below $1.25$.
To support the hypothesis of a rugged but shallow landscape, we conducted computational experiments for finite subsets of the integer lattice, which identified many local maxima that prevent local improvement strategies from reaching any global maximum.
We feel that these findings justify the exhaustive case analysis in Section~\ref{sec:4}, and the many delicate inequalities in that section give evidence for how close the paths get to the supremum MST-ratio.

\section{Hexagonal Lattice on Torus}
\label{sec:4}

In this section, we prove Claim~\ref{clm:upper_bound_for_infmax} for the hexagonal lattice on the torus.
We begin by constructing this lattice from a portion of the hexagonal lattice in the plane and proving that the minimum spanning trees in the two topologies are not very different in length.
In the remaining subsections, we give a precise statement of the theorem that implies Claim~\ref{clm:upper_bound_for_infmax} and prove the theorem with a packing argument in six steps.

\subsection{Plane versus Torus}
\label{sec:4.1}

We consider the hexagonal lattice on the torus rather than in $\Rspace^2$ in order to eliminate boundary effects, which appear when we study a finite portion of the hexagonal lattice.
Let $\uuu$ and $\vvv$ be two unit vectors with a $60^\circ$ degree angle between them, and write $\AL \subseteq \Rspace^2$ for the hexagonal lattice they span.
For every positive $n \in \Zspace$, let $\AL_n \subseteq \AL$ contain the $n^2$ points $a = \alpha \uuu + \beta \vvv$ with $0 \leq \alpha, \beta \leq n-1$.
We write $\AL_n'$ for the same $n^2$ points but with the topology of the torus, which we get by identifying $a$ with $a + in\uuu + jn\vvv$ for all $i, j \in \Zspace$, and defining the distance as the minimum Euclidean distance between any two representatives.
\begin{figure}[hbt]
    \centering \vspace{0.1in}
    \resizebox{!}{1.4in}{\input{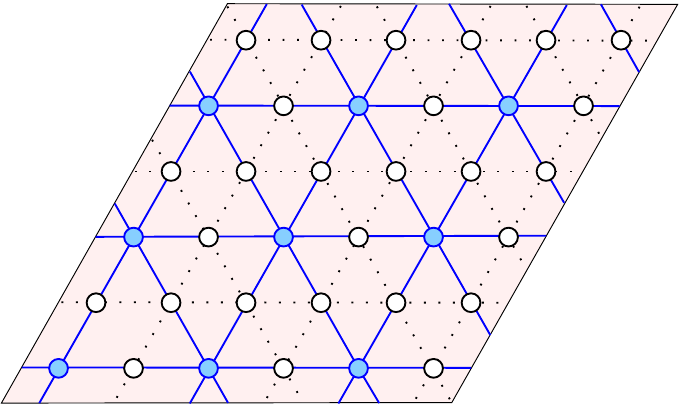_t}}
    \vspace{-0.0in}
    \caption{The hexagonal lattice of $36$ points on the torus, obtained by gluing opposite sides of the rhombus.
    The sublattice with twice the distance between neighboring points in shown in \emph{blue}.}
    \label{fig:rhombus}
\end{figure}
Equivalently, consider the rhombus of points $\varphi \uuu + \psi \vvv$ for real coefficients $- \frac{1}{2} \leq \varphi, \psi \leq n-\frac{1}{2}$, and glue this rhombus along opposite sides as illustrated for $n = 6$ in Figure~\ref{fig:rhombus}.
Call the boundary of this rhombus the \emph{seam}.
Its length is $4n$ in the plane but only $2n$ on the torus since the sides are glued in pairs.
Note also that every point of $\AL$ has distance at least $\sqrt{3} / 4$ from the nearest point in the seam.
\begin{lemma}
  \label{lem:plane_versus_torus}
  Let $\AL \subseteq \Rspace^2$ be the hexagonal lattice, $\AL_n \subseteq \AL$ the subset of $n^2$ points, and $\AL_n'$ the same $n^2$ points but on the torus, as described above.
  For any subset $B_n \subseteq \AL_n$ and the corresponding subset $B_n' \subseteq \AL_n'$ on the torus, the lengths of the minimum spanning trees satisfy $\Length{\MST{B_n'}} \leq \Length{\MST{B_n}} \leq \Length{\MST{B_n'}} + 32 \sqrt{2} \cdot n \sqrt{n}$.
\end{lemma}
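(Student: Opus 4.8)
The plan is to prove the two inequalities separately. \emph{Lower bound $\Length{\MST{B_n'}} \leq \Length{\MST{B_n}}$.} This one is immediate: the torus distance of two points is the minimum Euclidean distance over all representatives, hence at most the plane distance between the rhombus representatives. So $\MST{B_n}$, read as an abstract spanning tree on the common vertex set $B_n = B_n'$, has torus length at most its plane length, and $\MST{B_n'}$ can only be shorter.

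\emph{Upper bound: cut and reconnect.} Put $T' = \MST{B_n'}$ and draw each edge in the plane as the segment between the rhombus representatives of its endpoints. Call an edge \emph{wrapping} if that shortest representative uses a nonzero lattice vector; otherwise its plane and torus lengths agree. Deleting the wrapping edges leaves a plane forest $F_0$ on $B_n$ with components $R_1,\dots,R_c$ and $\Length{F_0}\le\Length{T'}=\Length{\MST{B_n'}}$, where $c-1$ is the number of wrapping edges. One cannot afford to reinstate a wrapping edge as its (long) plane segment: already for $B_n$ made of $\Theta(n)$ point pairs that are at distance $1$ across the seam but $\Theta(n)$ apart in the rhombus, every minimum spanning tree on the torus is \emph{forced} to use $\Theta(n)$ wrapping edges, so per-edge repair would cost $\Theta(n^2)$. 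Instead I would rebuild a plane spanning tree $T$ of $B_n$ by adding to $F_0$ a minimum spanning tree of a carefully chosen transversal $R=\{r_1,\dots,r_c\}$ with $r_i\in R_i$; then $\Length{\MST{B_n}}\le\Length{F_0}+\Length{\MST{R}}\le\Length{\MST{B_n'}}+\Length{\MST{R}}$, and everything reduces to making $\Length{\MST{R}}\le 32\sqrt2\,n\sqrt n$ (when $c=1$ there are no wrapping edges and the claim is trivial).

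\emph{Choosing $R$ and estimating $\Length{\MST{R}}$.} Split the wrapping edges into \emph{long} (torus length $\ge\sqrt n$) and \emph{short}. Looking at Kruskal's algorithm on the torus just before it inserts its first edge of length $\ge\sqrt n$, the current components are pairwise at distance $\ge\sqrt n$; one representative from each is a $\sqrt n$-separated set in a torus of diameter $<n$, so a disk-packing count bounds its size, hence the number of edges (in particular wrapping edges) of length $\ge\sqrt n$, by $O(n)$. In the contracted tree on the super-nodes $R_1,\dots,R_c$ every $R_i$ has degree $\ge1$; for the at most $O(n)$ super-nodes incident to some long wrapping edge, put $r_i$ anywhere in $R_i$; for every other $R_i$ all incident wrapping edges are short, so $R_i$ contains an endpoint of a wrapping geodesic of length $<\sqrt n$, which therefore lies within $\sqrt n$ of the seam --- take that point as $r_i$. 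Thus $R=P_2\cup P_1$, where $P_2$ has size $O(n)$ and sits in a box of side $O(n)$, so Lemma~\ref{lem:mst_insquare_upperbound} (equivalently, $\sum\ell_i^2\le 4s^2$ from \cite{GiPo68} plus Cauchy--Schwarz, which gives $\Length{\MST{P}}\le 2s\sqrt{|P|}$ for $P\subseteq[0,s]^2$) yields $\Length{\MST{P_2}}=O(n\sqrt n)$; and $P_1$ lies in the width-$2\sqrt n$ neighborhood of the seam, a curve of plane length $4n$, hence a region of area $O(n\sqrt n)$ containing $O(n\sqrt n)$ lattice points. Cutting this region into $O(\sqrt n)$ squares of side $O(\sqrt n)$, bounding the MST inside each square the same way, chaining the squares (cost $O(n)$), and applying Cauchy--Schwarz across squares gives $\Length{\MST{P_1}}=O(n\sqrt n)$ as well; one further edge joins $P_1$ to $P_2$. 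Carrying the constants through (with some slack) gives $\Length{\MST{R}}\le 32\sqrt2\,n\sqrt n$.

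\emph{Where the difficulty sits.} The substantive obstacle is exactly the observation that the torus MST genuinely can have $\Theta(n)$ wrapping edges (and $\Theta(n)$ edges longer than $\sqrt n$), so a local, per-edge repair is hopeless; the repair must be global, and packaging it as a single auxiliary minimum spanning tree whose vertices are confined either to a small set or to a thin strip around the seam is what lets Lemma~\ref{lem:mst_insquare_upperbound} deliver the correct $n\sqrt n$ order of growth. The remaining work is the careful bookkeeping that turns the $O(\cdot)$'s into the explicit constant $32\sqrt2$ --- essentially two auxiliary trees, each controlled through a side length of order $4n$, which is where the factor $2\cdot 16$ and the $\sqrt2$ come from.
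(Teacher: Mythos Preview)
Your argument is correct in structure and yields the right $O(n\sqrt{n})$ order, but it takes a genuinely different and more elaborate route than the paper. The paper does not split wrapping edges into long and short. Instead it observes that two edges of a Euclidean MST that cross the seam must do so at points at least $\tfrac12$ apart along the seam: this follows from the empty-circle property of MST edges together with the fact that every lattice point is at least $\sqrt{3}/4$ from the seam. Since the seam has length $2n$ on the torus, there are at most $4n$ wrapping edges and hence at most $8n$ endpoints; these $8n$ points lie in a square of side $8n$, and a \emph{single} application of Lemma~\ref{lem:mst_insquare_upperbound} gives exactly the constant $32\sqrt{2}$ with no further bookkeeping. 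Your detour through a $\sqrt{n}$-threshold, a packing bound for long edges, and a strip decomposition for short ones recovers the same asymptotics, but the claimed constant is only asserted ``with some slack,'' and your final sentence about ``two auxiliary trees, each controlled through a side length of order $4n$'' does not match the actual computation your outline would require (the strip piece alone needs a Cauchy--Schwarz over $O(\sqrt{n})$ sub-squares, not one box of side $4n$). So: the idea is sound, but the paper's seam-separation argument is both shorter and delivers the stated constant cleanly, whereas your route would need a careful pass to justify $32\sqrt{2}$ rather than merely some explicit constant.
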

\begin{proof}
  Fix two minimum spanning trees, $T$ of $B_n$ in $\Rspace^2$ and $T'$ of $B_n'$ on the torus.
  Since the distances on the torus are smaller than or equal to those in $\Rspace^2$, we have $\Length{T'} \leq \Length{T}$, which is the first claimed inequality.
  Let $E'$ be the edges of $T'$ that have the same length in both topologies, and let $E''$ be the other edges of $T'$, which are shorter on the torus than in $\Rspace^2$.
  To draw an edge of $E''$ in the plane so its length matches the length on the torus, we need to connect representatives of the endpoints that lie in different rhombi.
  Assuming one endpoint is in $\AL_n$, this edge crosses the seam.
  In contrast, every edge in $E'$ can be drawn between two points of $\AL_n$, so without crossing the seam.
  We will prove shortly that the distance between two crossings measured along the seam is at least $\frac{1}{2}$.
  Since the length of the seam is $2n$, this implies that $E''$ contains at most $4n$ edges.
  Let $V'' \subseteq \AL_n$ be the set of at most $8n$ endpoints of the edges in $E''$, and let $T''$ be a minimum spanning tree of $V''$, with distances measured in $\Rspace^2$.
  Since $\AL_n$ easily fits inside a square with sides of length $8n$, Lemma~\ref{lem:mst_insquare_upperbound} implies $\Length{T''} \leq 32\sqrt{2} \cdot n \sqrt{n}$.
  The edges in $E'$ together with the edges of $T''$ form a connected graph with vertices $\AL_n$.
  Hence,
  \begin{align}
    \Length{T} &\leq \Length{T'} + \Length{T''}
                \leq \Length{T'} + 32 \sqrt{2} \cdot n \sqrt{n} ,
  \end{align}
  which is the second claimed inequality.
  It remains to show that the distance between two crossings along the seam is at least $\frac{1}{2}$.
  Let $ab$ and $xy$ be two edges in $E''$, and recall that the greedy construction of the minimum spanning tree prohibits $x$ and $y$ to lie inside the smallest circle that passes through $a$ and $b$, and vice versa.
  If the edges share an endpoint, then the angle between them is at least $60^\circ$.
  Since the common endpoint is at distance at least $\sqrt{3}/4$ from the seam, this implies the claimed lower bound on the distance between the two crossings.
  So assume $a, b, x, y$ are distinct, and let $c \in ab$ and $z \in xy$ be the points that minimize the distance between the edges, and observe that $\Edist{c}{z}$ is a lower bound for the distance between the crossings.
  At least one of $c$ and $z$ must be an endpoint, so suppose $z = x$.
  But since $x$ lies outside the smallest circle of $a$ and $b$, and outside the unit circles centered at $a$ and $b$, the distance of $x$ to any point of $ab$ is at least $1$.
\end{proof}

The inequalities in Lemma~\ref{lem:mst_insquare_upperbound} generalize to all $2$-dimensional lattices.
Letting $\uuu$ and $\vvv$ be two shortest vectors that span a lattice, and assuming $1 = \norm{\uuu} \leq \norm{\vvv} = \nu$, we get $2 (4+4\nu)^{3/2} \cdot n \sqrt{n}$ as an upper bound for the difference in length, in which we compare a rhombus of $n \times n$ points in $\Rspace^2$ and on the torus, as before.

\subsection{Statement of Theorem}
\label{sec:4.2}

We fix $n$ to an even integer and write $\BL = \AL_n'$ for the hexagonal lattice on the torus.
Since $n$ is even, $\BL_1 = \{2x \mid x \in \BL\}$ is a hexagonal sublattice of $\BL$, and we set $\BL_3 = \BL \setminus \BL_1$; see Figure~\ref{fig:rhombus}.
The lengths of the three minimum spanning trees are easy to determine because they use only the shortest available edges, which have length $1$ for $\BL$ and $\BL_3$, and length $2$ for $\BL_1$.
The MST-ratio is therefore
\begin{align}
  \MSTratio{\BL}{\BL_1}  &=  \frac{\Length{\MST{\BL_1}} + \Length{\MST{\BL_3}}}{\Length{\MST{\BL}}} 
  =  \frac{2 \left( n^2/4-1 \right) + \left( 3n^2/4-1 \right)} {n^2-1}
  \stackrel{n \to \infty}{\longrightarrow} 1.25.
  \label{eqn:minMSTratiohex}
\end{align}
Call an edge \emph{short} if its length is $1$.
All other edges have length larger than the desired average, which is $\frac{5}{4} = 1.25$, so we call them \emph{long}.
While $\MST{\BL_3}$ has only short edges, and $\MST{\BL_1}$ uses only the shortest edges connecting its points, we claim that their combined length is as large as it can be.
\begin{theorem}
  \label{thm:maximum_MST-ratio_of_hexagonal_lattice}
  Let $\BL$ be a hexagonal lattice on the torus.
  Then the maximum MST-ratio of $\BL$ converges to $\frac{5}{4} = 1.25$ from below.
\end{theorem}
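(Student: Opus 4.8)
The plan is to fix an arbitrary $B \subseteq \BL$, write $W = \BL \setminus B$, and establish the uniform bound $\Length{\MST{B}} + \Length{\MST{W}} \le \tfrac54\,n^2 + O(n\sqrt n)$. Since the unit‑distance graph on the torus is connected, $\Length{\MST{\BL}} = n^2 - 1$, so this shows the MST‑ratio of $\BL$ is at most $\tfrac54 + O(n^{-1/2})$; together with Claim~\ref{clm:lower_bound_for_infmax} (or the explicit partition $B = \BL_1$ computed in \eqref{eqn:minMSTratiohex}) this gives convergence to $\tfrac54$ from below, and Lemma~\ref{lem:plane_versus_torus} transfers between torus and plane if needed. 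The first step is a Kruskal decomposition. Let $c_B$ and $c_W$ be the numbers of connected components of the unit‑distance (``short edge'') graphs on $B$ and on $W$. Running Kruskal, $\MST{B}$ uses exactly $|B| - c_B$ short edges to span its unit‑components and exactly $c_B - 1$ \emph{long} edges, each of length $\ge\sqrt3$, to merge them (every unit edge lies inside a component, so no short edge is ever used for merging); the same holds for $W$. Writing $S_B, S_W$ for the combined lengths of the long merging edges,
\[
  \Length{\MST{B}} + \Length{\MST{W}} = (n^2 - c_B - c_W) + S_B + S_W ,
\]
so the claim reduces to the \emph{excess inequality} $\sum_{e}(\Length{e}-1) - 2 \le \tfrac14 n^2 + O(n\sqrt n)$, the sum ranging over all long merging edges of both colors.

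The engine for the excess inequality is a packing argument whose key input is that the Euclidean minimum spanning tree is a subgraph of the Gabriel graph: for every long edge $e = pq$ of $\MST{B}$, the closed disk with diameter $pq$ contains no point of $B$ other than $p$ and $q$, hence meets $\BL$ only in \emph{white} points apart from its two endpoints; symmetrically, a long edge of $\MST{W}$ carries a disk that is all‑blue in its interior. To each long edge $e$ I would attach a charged region built from this monochromatic disk together with the hexagonal (Voronoi) neighborhoods of the points it forces to be empty — the very packings of hexagons drawn in Figure~\ref{fig:fourpartitions} — designed so that its area is at least $2\sqrt3\,(\Length{e}-1)$. If these regions are essentially pairwise disjoint, their areas sum to at most the torus area $\tfrac{\sqrt3}{2}n^2$, and since $\tfrac{\sqrt3}{2}n^2 / (2\sqrt3) = \tfrac14 n^2$ this yields $\sum(\Length{e}-1) \le \tfrac14 n^2$, as desired. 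The constant $\tfrac14$ is exactly the density attained by the distance‑$2$ sublattice $\BL_1$, which is therefore the configuration at which every inequality must be tight.

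Making the disjointness rigorous is where the bulk of the work — and, I expect, the remaining ``steps'' of the proof — goes. There are two sources of overlap to control. First, the Gabriel disks of long edges of the \emph{same} tree genuinely overlap (e.g.\ two merging edges incident to a common component), so the naive per‑edge charge must be replaced by a per‑component boundary accounting, using that a long edge is the longest edge of its fundamental cycle in the tree; this is where small components of $B$ (and of $W$) must be enumerated, since a component of size $\ge 2$ uses up its surrounding white room differently than an isolated point does. Second, the charged region of a blue long edge is interior‑white while that of a white long edge is interior‑blue, so two charged regions of opposite colors can overlap only in a bounded neighborhood of an endpoint; this must be quantified, and together with the $\pm O(n)$ slack in $c_B, c_W$ and the seam effects from Subsection~\ref{sec:4.1} and Lemma~\ref{lem:plane_versus_torus}, all absorbed into the $O(n\sqrt n)$ error term.

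The main obstacle is precisely this: extracting the \emph{optimal} constant $\tfrac14$ rather than $\tfrac14 + \varepsilon$. A crude area charge using Gabriel disks alone already falls short for the shortest long edges — those of length $\sqrt3$, $2$, and $\sqrt7$ — and only the additional room guaranteed by the hexagonal neighborhoods of the forced‑empty lattice points recovers the sharp bound, which is why a careful, configuration‑by‑configuration analysis (the six steps of Section~\ref{sec:4}) seems unavoidable. Everything else — the Kruskal decomposition, the reduction to the excess inequality, and the collection of lower‑order terms — is routine by comparison.
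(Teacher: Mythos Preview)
Your Kruskal decomposition and the reduction to the excess inequality $\sum_{e \text{ long}}(\Length{e}-1)\le \tfrac14 n^2+O(n\sqrt n)$ are correct and are essentially a reformulation of the paper's credit/cost scheme (credit $\tfrac14$ per short edge, cost $\Length{e}-\tfrac54$ per long edge). Past that point, however, your route and the paper's diverge sharply.

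The paper does \emph{not} run an area packing. It first replaces the Euclidean MST by the \emph{hexagonal} MST, $\hexMST{B}$, which only makes the bound harder (inequality \eqref{eqn:Euclideanhex}). The payoff is that the boundary of the $k$-th hexagonal thickening $\Thicken{k}{B}$ is a cycle of \emph{unit} edges lying entirely in $W=\BL\setminus B$. The whole argument is then a combinatorial edge count: Lemmas~\ref{lem:size_of_loop-free_subgraph_I} and \ref{lem:size_of_loop-free_subgraph_II} exhibit explicit loop-free subgraphs $U_k\subseteq \hexMST{W}$ made of these frontier edges (plus ``satellite'' attachments), and the bookkeeping in Subsection~\ref{sec:4.8} matches the number of such short $W$-edges against the number of long $B$-edges, stratified by hexagonal length via the room/house/block/compound hierarchy. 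No areas, no Gabriel disks; the tightness at the distance-$2$ sublattice falls out because the frontier of $\Thicken{1}{\BL_1}$ uses every short $W$-edge exactly once.

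Your area scheme, by contrast, is left at the level of a heuristic, and I think there is a genuine obstacle to making it work as stated. The target constant $2\sqrt3$ is \emph{exact}: for $B=\BL_1$ the charged regions must tile the torus with zero slack, so any construction based on round Gabriel disks (area $\pi<2\sqrt3$ for a length-$2$ edge, as you note) has to be patched with auxiliary hexagonal pieces whose disjointness is precisely the hard combinatorics you defer. Moreover, your claim that opposite-color regions overlap only ``in a bounded neighborhood of an endpoint'' controls the overlap \emph{per pair} but not the \emph{number} of overlapping pairs, which can be $\Theta(n^2)$; and same-color Gabriel disks genuinely overlap along entire MST paths, not just at isolated components. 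Untangling this forces you back to a per-component boundary accounting that is, in effect, the paper's frontier count in disguise. In short: your reduction is right, but the area charge is not yet a proof, and the paper's hexagonal-distance/frontier-counting machinery is doing work that your Gabriel-disk picture does not replace.
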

The proof consists of six steps, which are presented in the same number of subsections:
\ref{sec:4.3} introduces the hexagonal distance, compares its MST with the Euclidean MST, and uses the former to formulate the proof strategy;
\ref{sec:4.4} introduces the main tool, which are hexagonal-neighborhoods of the lattice points;
\ref{sec:4.5} constructs a hierarchy of such neighborhoods aimed at counting the short edges;
\ref{sec:4.6} introduces so-called satellites, which provide additional short edges needed in the proof;
\ref{sec:4.7} forms loop-free subgraphs of short edges and bounds their sizes;
and \ref{sec:4.8} does the final accounting while paying special attention to the cases in which all long edges have length between $\sqrt{3}$ and $3$.
Throughout this proof, we use the fact that the minimum spanning tree can be computed by greedily adding the shortest available edge that does not form a cycle to the tree \cite{Bor26,Kru56}.

\subsection{Hexagonal Distance and Proof Strategy}
\label{sec:4.3}

It is convenient to write the points in $\BL$ with three integer coordinates.
To explain this, let
\begin{align}
  \xxx &= \tfrac{1}{\sqrt{3}} \left( 0, 1 \right) , ~~
  \yyy  = \tfrac{1}{\sqrt{3}} \left( - \tfrac{\sqrt{3}}{2}, - \tfrac{1}{2} \right) , ~~
  \zzz  = \tfrac{1}{\sqrt{3}} \left( \tfrac{\sqrt{3}}{2}, - \tfrac{1}{2} \right) 
\end{align}
be three vectors, each of length $\sqrt{3}/3$, that mutually enclose an angle of $120^\circ$.
These are the projections of the unit coordinate vectors of $\Rspace^3$ onto the plane normal to the diagonal direction, scaled such that the three points are mutually one unit of distance apart.
The plane consists of all points $u = a \xxx + b \yyy + c \zzz$ for which $a+b+c = 0$, and such a point belongs to the hexagonal lattice iff $a,b,c \in \Zspace$; see Figure~\ref{fig:unitballhex}.
Given a second point, $v = \alpha \xxx + \beta \yyy + \gamma \zzz$, we write $i = a-\alpha$, $j = b-\beta$, $k = c-\gamma$ to compute the squared Euclidean distance between $u$ and $v$.
Since $\xxx^2 = \yyy^2 = \zzz^2 = \frac{1}{3}$ and $\xxx \yyy = \yyy \zzz = \zzz \xxx = - \frac{1}{6}$, we get
\begin{align}
  \Edist{u}{v}^2 &= 
  \norm{i\xxx + j\yyy + k\zzz}^2
     =  \tfrac{1}{3} (i^2 + j^2 + k^2) - \tfrac{1}{3} (ij + ik + jk)
     =  i^2 +ij +j^2 ,
\end{align}
in which we get the final expression using $k = -(i+j)$.
For points of the hexagonal lattice, $i$ and $j$ are integers, and so is the squared Euclidean distance between them.
It follows that the minimum distance between two points in $\BL$ is $1$.
\begin{figure}[hbt]
    \centering \vspace{0.10in}
    \resizebox{!}{1.2in}{\input{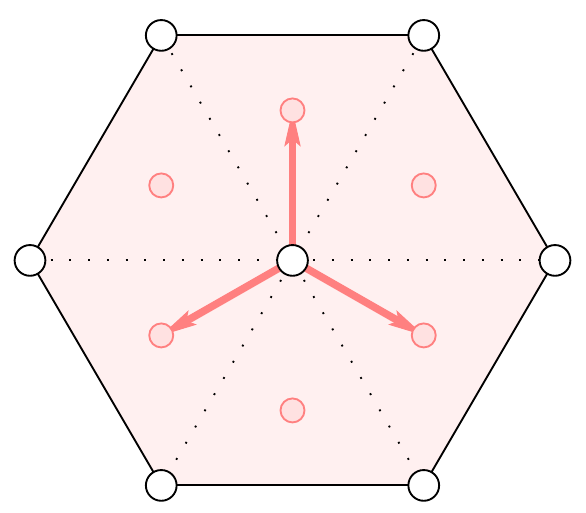_t}}
    \vspace{-0.05in}
    \caption{The unit disk under the hexagonal distance in the plane.
    The edges that connect the origin to the corners at $\pm (\xxx-\yyy)$, $\pm (\yyy-\zzz)$, $\pm (\zzz-\xxx)$ decompose the hexagon into six equilateral triangles, whose barycenters are $\pm \xxx$, $\pm \yyy$, $\pm\zzz$.}
    \label{fig:unitballhex}
\end{figure}

\smallskip
We adapt the notion of distance to construct neighborhoods in the hexagonal lattice.
By definition, the \emph{hexagonal distance} between points $u = a \xxx + b \yyy + c \zzz$ and $v = \alpha \xxx + \beta \yyy + \gamma \zzz$ is
\begin{align}
  \hexdist{u}{v}  &=  \max \{ |a-\alpha|, |b-\beta|, |c-\gamma| \}  =  \max \{ |i|, |j|, |i+j| \} .
\end{align}
The \emph{unit disk} under this distance consists of all points with hexagonal distance at most $1$ from the origin:
$\Hdisk = \{ u \in \Rspace^2 \mid \hexdist{u}{0} \leq 1 \}$.
It is the regular hexagon with unit length sides that is the convex hull of the points $\pm (\xxx-\yyy)$, $\pm (\yyy-\zzz)$, $\pm (\zzz-\xxx)$; see Figure~\ref{fig:unitballhex}.
For $B \subseteq \BL$, we write $\hexMST{B}$ for the spanning tree that minimizes the hexagonal length.
We construct it by adding the edges in sequence of non-decreasing hexagonal length, breaking ties with Euclidean length, and breaking the remaining ties arbitrarily.
Since $\hexMST{B}$ is a spanning tree but not necessarily the one that minimizes Euclidean length, we have
\begin{align}
  \Length{\MST{B}}  &\leq  \Length{\hexMST{B}} ,
    \label{eqn:Euclideanhex}
\end{align}
in which we measure the Euclidean length on both sides.
To prove Theorem~\ref{thm:maximum_MST-ratio_of_hexagonal_lattice}, we show that for every $B \subseteq \BL$, the average (Euclidean) length of the long edges in $\hexMST{B}$ and the short edges in $\hexMST{\BL \setminus B}$ is at most $\frac{5}{4}$.
Interchanging $B$ and $\BL \setminus B$, we get the same relation by symmetry.
Using \eqref{eqn:Euclideanhex}, this implies
\begin{align}
  \Length{\MST{B}} + \Length{\MST{\BL \setminus B}}
    &\leq \Length{\hexMST{B}} + \Length{\hexMST{\BL \setminus B}} \leq \tfrac{5}{4} (n^2 - 2) .
  \label{eqn:UpperBound}
\end{align}
Compare this with \eqref{eqn:minMSTratiohex}, which establishes $\Length{\MST{\BL_1}} + \Length{\MST{\BL_3}} = \frac{5}{4}n^2 - 3$ for the partition $\BL = \BL_1 \sqcup \BL_3$.
The right-hand side differs from the upper bound in \eqref{eqn:UpperBound} by only a small additive constant.
We thus conclude that the maximum MST-ratio of $\BL$ converges to $\frac{5}{4}$ from below, as claimed by Theorem~\ref{thm:maximum_MST-ratio_of_hexagonal_lattice}.

\subsection{Hierarchy of Habitats}
\label{sec:4.4}

Let $\Tree{\ell}{B}$ be the subset of edges in $\hexMST{B}$ whose hexagonal lengths are at most $\ell$, together with the endpoints of these edges.
For example, $\Tree{0}{B}$ has zero edges, $\Tree{1}{B}$ consist of all short edges, and $\Tree{\ell}{B} = \hexMST{B}$ for sufficiently large $\ell$.
All edges connecting points in different components of $\Tree{\ell}{B}$ have hexagonal length $\ell+1$ or larger.
We thus write $k\Hdisk$ for the scaled copy of the unit disk and call
\begin{align}
  \Thicken{k}{B}  &=  \bigcup\nolimits_{u \in B} (k\Hdisk +u)
\end{align}
the \emph{$k$-th thickening} of $B$, in which $k\Hdisk + u$ is the translate of $k\Hdisk$ whose center is $u$.
As illustrated in Figure~\ref{fig:thickenhex}, the $k$-th thickenings of points $u$ and $v$ \emph{overlap}, \emph{touch}, are \emph{disjoint} if the hexagonal distance between $u$ and $v$ is less than, equal to, larger than $2k$, respectively.
\begin{figure}[hbt]
    \centering \vspace{0.1in}
    \resizebox{!}{2.4in}{\input{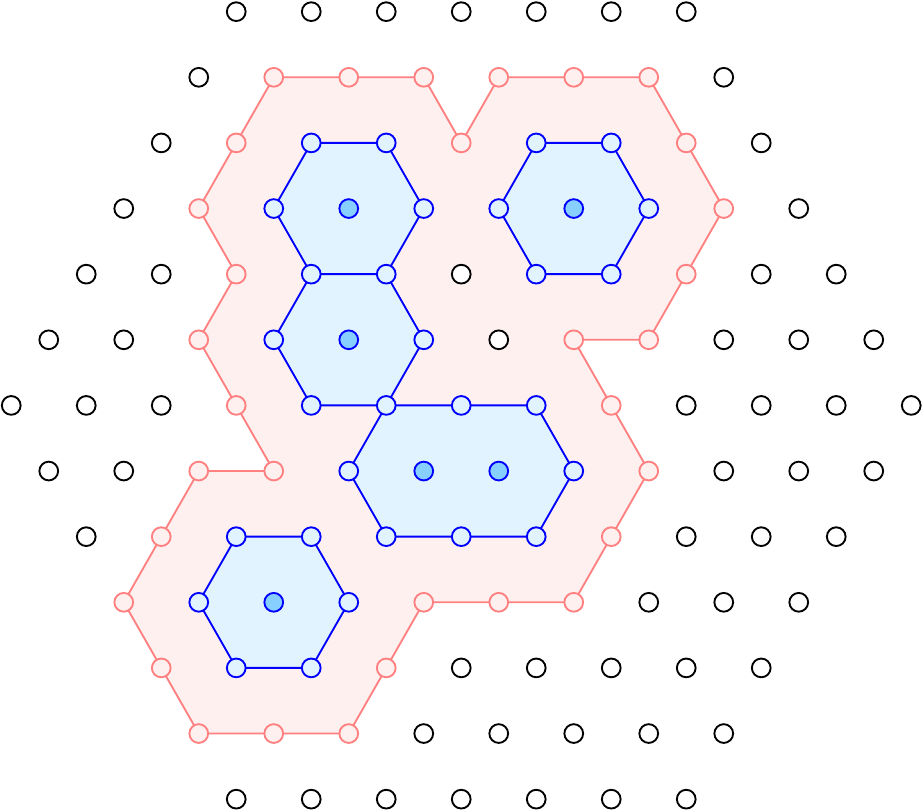_t}}
    \vspace{-0.0in}
    \caption{The \emph{blue} $1$-st thickening and the \emph{pink} $2$-nd thickening of $B = \{{\tt a}, {\tt b}, {\tt c}, {\tt d}, {\tt e}, {\tt f}\}$ in the hexagonal lattice.
    $\Hdisk+{\tt a}$ and $\Hdisk+{\tt b}$ share an edge and therefore form two rooms in a common house, while $\Hdisk+{\tt e}$ and $\Hdisk+{\tt f}$ overlap and thus form a one-room house in $\Thicken{1}{B}$.
    These two houses form a block, and together with $\Hdisk+{\tt d}$, they form a compound of two blocks.
    $\Hdisk+{\tt c}$ is a room, a house, a block, and a compound by itself.
    The two compounds lie in the interior of a room in $\Thicken{2}{B}$.}
    \label{fig:thickenhex}
\end{figure}

The boundary of $k\Hdisk$ passes through $6k$ points of the hexagonal lattice, which we call the \emph{vertices} of $k\Hdisk$.
Furthermore, we call the $6k$ (short) edges that connect these points in cyclic order the \emph{edges} of $k\Hdisk$.
Let $B_k \subseteq B$ be the vertex set of a component of $\Tree{2k-1}{B}$, and observe that for all $u,v \in B_k$ there is a sequence of points $u = x_1, x_2, \ldots, x_m = v$ in $B_k$ such that $k\Hdisk+x_i$ and $k\Hdisk+x_{i+1}$ overlap for all $1 \leq i \leq m-1$.
We define the \emph{frontier} of the component, denoted $\Frontier{k}{B_k}$, as the lattice points and the connecting (short) edges in the boundary of $\Thicken{k}{B_k}$.
Furthermore, $\Frontier{k}{B}$ is the union of frontiers of the components of $\Tree{2k-1}{B}$.
These notions are illustrated in Figure~\ref{fig:thickenhex}, which shows $\Frontier{1}{B}$ and $\Frontier{2}{B}$ for six marked points.
Note that the edge shared by $\Hdisk + {\tt a}$ and $\Hdisk + {\tt b}$ is part of $\Frontier{1}{B}$.

\subsection{Subdivided Foreground and Background}
\label{sec:4.5}

Consider the $1$-st thickening of $B$, which for the time being we call the \emph{foreground}.
Letting $B_1 \subseteq B_2$ be the vertex sets of two nested components of $\Tree{1}{B}$ and $\Tree{2}{B}$, we call $\Thicken{1}{B_1}$ a \emph{room} and $\Thicken{1}{B_2}$ a \emph{block} of the foreground.
We say two rooms are \emph{adjacent} if they share at least one edge.
In Figure~\ref{fig:thickenhex}, there are five rooms, two of which are adjacent, and three blocks, one of which contains three rooms.

\smallskip
To make a finer distinction, observe that for any edge, its Euclidean length is smaller than or equal to the hexagonal length.
The two notions agree on edges with slope $0$, $\sqrt{3}$, and $-\sqrt{3}$.
Consider $\Tree{2}{B}$ and $\Tree{3}{B}$ after removing all edges whose Euclidean length equals $2$ and $3$, respectively, and let $B_2'$ and $B_3'$ be the vertex sets of the components that satisfy $B_1 \subseteq B_2' \subseteq B_2 \subseteq B_3'$.
Observe that any two rooms in $\Thicken{1}{B_2'}$ have a sequence of pairwise adjacent rooms connecting them.
We therefore call $\Thicken{1}{B_2'}$ a \emph{house}.
For comparison, any two rooms in $\Thicken{1}{B_2}$ have a sequence of rooms connecting them such that any two consecutive rooms share at least a vertex but not necessarily a full edge.
Similarly, for any two blocks in $\Thicken{1}{B_3'}$, there is a sequence of blocks connecting them such that the channel separating any two consecutive blocks at its narrowest place is only $\sqrt{3}/2$ wide.
We therefore call $\Thicken{1}{B_3'}$ a \emph{compound}; see Figure~\ref{fig:thickenhex} for examples.
For comparison, the channel that separates two compounds is at its narrowest place at least one unit of distance wide.
A few observations:
\medskip \begin{description}
  \item[(i)] all vertices of $\Frontier{1}{B}$ are points in $\BL \setminus B$;
  \item[(ii)] all edges of $\Frontier{1}{B}$ are short;
  \item[(iii)] the frontier of a room consists of at least six (short) edges.
\end{description} \medskip
We call the complement of the foreground the \emph{background}, and the components of the background its \emph{backyards}.
We say a backyard is \emph{adjacent} to a house if the two share a non-empty portion of their boundary.
There are configurations in which the number of backyards is twice the number of houses; see Figure~\ref{fig:fourpartitions} on the left, where each backyard is adjacent to three houses, and each house is adjacent to six backyards.
In general, we distinguish between backyards adjacent to at most two and at least three houses, denoting their numbers $\alpha_1$ and $\beta_1$, respectively.
We prove an upper bound for $\beta_1$ in terms of the number of houses and blocks.
\begin{lemma}
  \label{lem:number_of_backyards}
    Given $h_1$ houses arranged in $b_1$ blocks, the number of backyards adjacent to three or more houses satisfies $\beta_1 \leq 2h_1 - 2b_1 + 2$.
\end{lemma}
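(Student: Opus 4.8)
The plan is to treat the combinatorics of the foreground as a planar-graph incidence problem and apply Euler's formula. First I would pass from the geometric picture to a planar bipartite incidence graph $G$ whose two vertex classes are the $h_1$ houses and the backyards counted by $\beta_1$ (the ``fat'' backyards adjacent to three or more houses), with an edge of $G$ for each adjacency between a house and a fat backyard. Because houses and backyards tile (a portion of) the torus, $G$ embeds in the torus; but since $\BL$ is finite I would instead work on the sphere/plane by treating the ambient unbounded background as one extra face, or, more cleanly, observe that each block is a topological disk and reason block by block. The key structural inputs are observation (iii) — the frontier of a room has at least six short edges, hence a house has frontier of length at least six — together with the fact that two distinct houses meet along a channel of width $\ge \sqrt3/2$ but two distinct compounds are separated by width $\ge 1$; this is what guarantees the incidence structure is ``sparse'' enough for an Euler bound.

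The main steps, in order, would be: (1) Fix a block $\mathcal B$ consisting of some number $h$ of houses; show its frontier $\Frontier{1}{\mathcal B}$ is a single closed curve (the block is simply connected in the relevant sense) and that the houses of $\mathcal B$ decompose the region bounded by this curve into $h$ faces meeting along the internal channels. (2) Build the incidence graph restricted to $\mathcal B$: vertices = the $h$ houses plus the fat backyards touching $\mathcal B$, edges = incidences. Each fat backyard in $\mathcal B$ has degree $\ge 3$ by definition, so $2e \ge 3\beta_{\mathcal B}$ where $\beta_{\mathcal B}$ is the number of fat backyards attributed to $\mathcal B$. (3) Bound the number of edges from above using planarity: $e \le 3v - 6$ for a simple planar graph, or better, since the graph is bipartite, $e \le 2v - 4 = 2(h + \beta_{\mathcal B}) - 4$. (4) Combine $3\beta_{\mathcal B} \le 2e \le 4(h+\beta_{\mathcal B}) - 8$, which rearranges to $\beta_{\mathcal B} \le 4h - 8$ — too weak, so I would instead exploit that each house has at least six frontier edges and each backyard–house incidence consumes a bounded arc of frontier, giving a sharper count of the form $\beta_{\mathcal B} \le 2h - 2 + (\text{boundary correction})$, summing over the $b_1$ blocks to produce $\beta_1 \le \sum_{\mathcal B}(2h_{\mathcal B} - 2) + 2 = 2h_1 - 2b_1 + 2$, where the trailing $+2$ comes from the single outer region.

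The cleanest route, which I would pursue first, is the direct Euler-characteristic computation: embed all houses and all backyards as the faces of a graph $\Gamma$ drawn on the sphere (one-point compactification), whose edges are the maximal frontier arcs shared by a house and a backyard and whose vertices are the branch points where three or more such arcs meet. Then $V - E + F = 2$. Each fat backyard contributes a face incident to $\ge 3$ edges; each house is one face; thin backyards (the $\alpha_1$ of them) and the outer face absorb the rest. Counting edge-face incidences from the face side versus from the house side, and using observation (iii) to lower-bound how much boundary each house has, yields a linear inequality that rearranges exactly to $\beta_1 \le 2h_1 - 2b_1 + 2$; the role of $b_1$ is that houses within one block share channels (so their frontiers are ``glued''), reducing the available boundary and hence reducing $\beta_1$ by $2$ per extra block.

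The hard part will be step (2)/(3): making precise the claim that the house–backyard adjacency structure is planar in the right way and that each fat backyard genuinely ``uses up'' three distinct houses with three distinct shared boundary arcs, rather than touching one house along three separate arcs. One must rule out, or correctly account for, a backyard that wraps around and touches the same house multiple times, and a house whose frontier is pinched. I expect this will require a careful argument that a fat backyard, being a component of the background, has connected boundary with a genuine cyclic sequence of at least three distinct incident houses, using the width-$\ge\sqrt3/2$ separation between houses in the same block and width-$\ge 1$ between compounds to guarantee the relevant arcs are nondegenerate. Once that is nailed down, the Euler-formula bookkeeping is routine.
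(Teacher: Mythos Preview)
Your instinct to encode the house/backyard incidence as a graph on the surface and invoke Euler's formula is right, but none of the three concrete routes you sketch actually closes, and the key structural observation is missing from all of them.

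First, the setting is the torus (Section~\ref{sec:4.1}), not the sphere; the relevant Euler characteristic is $0$. Working block by block on the plane is not a substitute, because a single backyard can border houses from several different blocks, so you cannot attribute each fat backyard to one block and sum $\sum_{\mathcal B}(2h_{\mathcal B}-2)$. Second, your bipartite arithmetic is wrong: from $3\beta_1 \le 2e \le 4(h_1+\beta_1)-8$ one gets $-\beta_1 \le 4h_1-8$, which is vacuous, not $\beta_1 \le 4h_1-8$; without an upper bound on house degrees this route is a dead end. Third, observation~(iii) plays no role in this lemma; the minimum of six frontier edges per room is used later to count short edges, not to bound $\beta_1$.

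What you are missing is the right graph. Take the houses as the \emph{nodes} of a graph $G$ drawn on the torus, and whenever two houses meet at a shared boundary vertex, draw an \emph{arc} between the corresponding nodes through that vertex. Two houses share a vertex iff they lie in the same block, so the connected components of $G$ are exactly the blocks---this is how $b_1$ enters, with no block-by-block summation needed. The arcs can be routed so that each face of $G$ contains exactly one backyard. Deleting each loop and one arc from each pair of parallel arcs removes precisely the faces containing the thin backyards (adjacent to $\le 2$ houses); every remaining face now has at least three sides, giving $2a_1 \ge 3\beta_1$. The torus Euler inequality $h_1 - a_1 + \beta_1 \ge b_1 - 1$ (valid for an embedded graph with $b_1$ components) then yields $\beta_1 \le 2h_1 - 2b_1 + 2$ immediately.
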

\begin{proof}
  We construct a graph $G = G(B)$ on the torus by placing a node inside each house, and whenever two houses meet at a boundary vertex, we connect the corresponding nodes with a curved arc that passes through the shared vertex.
  This can be done such that no two of the arcs cross and each face of $G$ contains one backyard.
  A face bounded by a single arc (\emph{loop}) or two arcs (\emph{multi-arcs}) contains a backyard adjacent to at most two houses and thus does not count toward $\beta_1$.
  We remove this face by deleting the loop or one of the two multi-arcs.
  The resulting graph has $h_1$ nodes, $b_1$ components, and $\beta_1$ faces.
  Write $a_1$ for the number of arcs.
  If the graph is connected and all faces are bounded by three arcs, we have $h_1 - a_1 + \beta_1 = 0$ because the Euler characteristic of the torus is $0$.
  Whenever we remove an arc from this graph, we either merge two faces or split a component, but it is also possible that the removal of the arc has neither of those two side-effects.
  Hence, we have $h_1 - a_1 + \beta_1 \geq b_1 - 1$ in the general case.
  Since $2a_1 \geq 3\beta_1$, this implies $\beta_1 \leq 2h_1 - 2b_1 + 2$, as claimed.
\end{proof}

\subsection{Satellites}
\label{sec:4.6}

By definition, compounds cannot be packed as tightly as blocks; see Figure~\ref{fig:fourpartitions} with lattice points between the compounds in the lower middle but no such points between the blocks on the right.
Recall that each component of $\Thicken{1}{B}$ is contained in a room of $\Thicken{2}{B}$.
For each such room, we single out the largest compound it contains---breaking ties arbitrarily---and call this the \emph{big compound} of the room.
All others are \emph{small compounds} of the room.
We refer to certain lattice points close to one or more compounds as satellites.
The targeted lattice points are at distance $\sqrt{3}/2$ outside $\Thicken{1}{B}$ and either on the boundary or in the interior of $\Thicken{2}{B}$.
\begin{figure}[hbt]
    \centering \vspace{0.05in}
    \resizebox{!}{1.1in}{\input{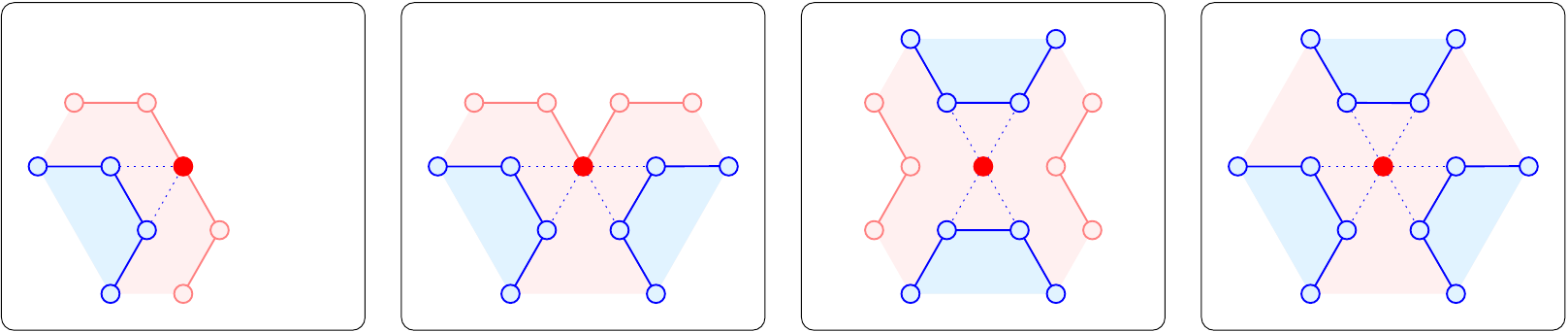_t}}
    \vspace{-0.0in}
    \caption{From \emph{left} to \emph{right}: a single, a double, another double, and a triple satellite in \emph{red}. 
    In the \emph{left} two cases, the satellite belongs to the frontier of a room of the $2$-nd thickening of $B$, while in the \emph{right} two cases, the satellite lies in the interior of such a room.}
    \label{fig:satellite}
\end{figure}

The difference between small and large compounds influences which lattice points we call satellites.
For each small compound we find three satellites as follows:  sandwich the compound between three lines with slopes $0, \pm\sqrt{3}$, choose a (short) edge as the basis of an equilateral triangle outside the compound on each line, and pick the vertex of this triangle opposite to the basis as a \emph{satellite}.
Observe that the Euclidean distance between any two satellites of the same compound is at least $3$.
In contrast, we pick six lattice points as the satellites of the big compound by sandwiching it between six lines, two each of slope $0, \pm\sqrt{3}$, choosing one basis on each line, and picking the vertex of the equilateral triangle opposite to the basis as a satellite.
The Euclidean distance between any two such satellites is at least $\sqrt{3}$.

\smallskip
As illustrated in Figure~\ref{fig:satellite}, a lattice point can be a satellite of one, two, or three compounds in the same room.
Accordingly, we call the point a \emph{single}, \emph{double}, or \emph{triple satellite} of the room, respectively.
A single satellite is necessarily a vertex on the frontier of the room, a triple satellite is necessarily in the interior of the room, and a double satellite can be one or the other.
For a room, $R$, we write $s(R)$ and $d(R)$ for the number of single and double satellites on its frontier, and $e(R)$ and $t(R)$ for the number of double and triple satellites in its interior.
Summing over all rooms in $\Thicken{2}{B}$, we set $s_1 = \sum s(R)$, $d_1 = \sum d(R)$, $e_1 = \sum e(R)$, $t_1 = \sum t(R)$, and refer to $s_1, d_1, e_1, t_1$ as the \emph{satellite sums} of $\Thicken{2}{B}$.
Furthermore, let $c_1$ be the number of compounds of $\Thicken{1}{B}$ and $r_2$ the number of rooms of $\Thicken{2}{B}$.
Since $s(R) + 2d(R) + 2e(R) + 3 t(R)$ is three times the number of small compounds in $R$ plus six for the big compound, the satellite sums satisfy a linear relation, which we state together with a property of short edges connecting satellites in the interior:
\medskip \begin{description}
  \item[(iv)] if $c_1 > 1$, then the satellite sums of $\Thicken{2}{B}$ satisfy $s_1 + 2d_1 + 2e_1 + 3t_1 = 3c_1+3r_2$;
  \item[(v)] any unit length edge connecting blocks of $\Thicken{1}{B}$ inside a room of $\Thicken{2}{B}$ with each other or to satellites in the interior of this room is contained in the interior of this room.
\end{description} \medskip
By construction, there are $s(R) + d(R)$ satellites that are vertices of $R$.
We prove a stronger lower bound on the number of vertices, which also strengthens Claim~{\sf (iii)}.
\begin{lemma}  \label{lem:satellites_on_the_frontier} Assume $r_2 \geq 2$ and let $R$ be a room of $\Thicken{2}{B}$.
  Then the frontier of $R$ has at least
  $6 + \frac{2}{3}s(R) + \frac{4}{3} d(R)$ vertices.
\end{lemma}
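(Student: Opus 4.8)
The plan is to read the bound off a turning-angle (discrete Gauss--Bonnet) count of the closed polygonal frontier, where $R=\Thicken{2}{B_k}$ is the room in question and $B_k\subseteq B$ the corresponding component of $\Tree{3}{B}$. Since $R$ is a union of translates of the big hexagon $2\Hdisk$, every lattice point $p$ on $\partial R$ lies on the boundary of at least one such translate, which by itself already covers $120^\circ$ at $p$; hence the number $\sigma(p)$ of the six unit lattice sectors at $p$ occupied by $R$ lies in $\{2,3,4,5\}$ and (outside pinch points) is contiguous, so the exterior turn of the frontier at $p$ equals $180^\circ-60^\circ\sigma(p)$ and there are no convex spike vertices. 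For a frontier consisting of one simple closed curve bounding a topological disk, summing turns to $360^\circ$ and rearranging yields
\begin{align}
  \#\{\text{lattice points on }\Frontier{2}{B_k}\}\;=\;6+\sum\nolimits_p\bigl(\sigma(p)-2\bigr),
  \label{eqn:gaussbonnet}
\end{align}
the sum over all lattice points $p$ of $\partial R$; every summand is a non-negative integer, which in passing re-proves the six-edge bound {\sf (iii)}.

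Next I would pin down $\sigma(p)$ at a satellite. By construction a satellite $p$ of a compound $C$ of $R$ is the apex of a unit equilateral triangle whose base is a full side of some radius-$1$ hexagon $\Hdisk+c$ of $C$ sitting on a supporting line of $C$; equivalently, $p$ is the \emph{midpoint} of the corresponding side of the radius-$2$ hexagon $2\Hdisk+c$. As $c\in B_k$ we have $2\Hdisk+c\subseteq R$, so $R$ contains a whole closed half-plane near $p$ and $\sigma(p)\ge 3$. If $p$ is in addition a \emph{double} satellite on $\partial R$, it is such an apex for two compounds $C_1,C_2$ of $R$, producing two half-planes in $R$ with boundary lines $L_1,L_2$ through $p$; the case $L_1\parallel L_2$ leads to a contradiction --- either the two triangle-bases coincide, forcing $C_1$ and $C_2$ to share a hexagon, or the half-planes face opposite ways, so that $(2\Hdisk+c_1)\cup(2\Hdisk+c_2)$ covers all six sectors and $p\in\interior{R}$ --- so $L_1,L_2$ have distinct slopes and their union covers at least four of the six sectors, i.e.\ $\sigma(p)\ge 4$.

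The accounting is then immediate: the satellites on $\partial R$ are pairwise distinct lattice points, so \eqref{eqn:gaussbonnet} gives
\begin{align}
  \#\{\text{lattice points on }\Frontier{2}{B_k}\}
   &\;\ge\; 6+\sum\nolimits_{\text{single sat. }p}\bigl(\sigma(p)-2\bigr)
          +\sum\nolimits_{\text{double sat. }p}\bigl(\sigma(p)-2\bigr) \\
   &\;\ge\; 6+s(R)+2\,d(R),
\end{align}
which is in fact slightly stronger than the asserted $6+\tfrac23 s(R)+\tfrac43 d(R)$.

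The delicate point --- and, I expect, the main obstacle --- is upgrading \eqref{eqn:gaussbonnet} to a valid ``$\ge$'' for \emph{every} room, not just a disk-bounding simple frontier. Holes are benign: each hole boundary adds at least $12$ to the right-hand side while the Euler term drops by only $6$ per hole, so the inequality only improves. What must be handled with care are lattice points where $\partial R$ pinches (the occupied sectors are then disconnected and $p$ should be counted with multiplicity) and the possibility that $R$ wraps around and is non-contractible on the torus; dispatching or absorbing these cases is where the detailed bookkeeping is needed. The hypothesis $r_2\ge 2$ is used only to ensure $\partial R\neq\varnothing$, so that the quantity being bounded is nonempty.
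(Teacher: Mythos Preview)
Your turning-angle approach is genuinely different from the paper's, and in the disk case it is both cleaner and stronger: you obtain $N\ge 6+s(R)+2d(R)$ (indeed $6+s(R)+3d(R)$, since the paper records that the external angle at a double satellite on the frontier is $60^\circ$, i.e.\ $\sigma=5$), whereas the lemma only asks for $6+\tfrac23 s(R)+\tfrac43 d(R)$. The paper instead proves two separate inequalities on the perimeter---$\perimeter{R}\ge 16$ whenever $R$ contains at least two compounds, and $\perimeter{R}\ge 1+s+2d$ via a purely local neighbour analysis of satellites---and takes the $\tfrac13$/$\tfrac23$ convex combination. The virtue of the paper's route is that the neighbour analysis is topology-free: it never appeals to $\chi(R)$.

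The gap you flag is real and is not merely bookkeeping. Your identity is $N=6\chi(R)+\sum_p(\sigma(p)-2)$, so once $\chi(R)\le 0$ (an annular room, a room wrapping a handle of the torus, a room with several holes) you lose the constant $6$ or more, and the surplus in $\sum(\sigma-2)$ must absorb it. Your one-line treatment of holes (``adds at least $12$'') is not justified; a direct computation gives that a disk-type hole with $M$ boundary vertices contributes $M+6$ to $\sum(\sigma-2)$, which does compensate for the $-6$ from $\chi$, but you have not argued this, and the analogous compensation for non-contractible boundary components or pinch points is not addressed. More tellingly, your \emph{stronger} target $6+s+2d$ is actually false without $r_2\ge 2$: the paper exhibits a $12$-vertex backyard polygon with $s=0$, $d=6$ (alternating double satellites and non-satellites), for which $6+s+2d=18>12$. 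The hypothesis $r_2\ge 2$ is not ``only to ensure $\partial R\neq\varnothing$''; the paper uses it precisely to rule out this alternating $12$-gon as the sole backyard, and without that exclusion even the weaker bound $6+\tfrac23 s+\tfrac43 d=14$ fails. So the Euler-characteristic deficit cannot in general be absorbed into the satellite terms alone, and a second global input (like the paper's $\perimeter{R}\ge 16$) is genuinely needed.
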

\begin{proof}
  Let $p$, $s$, $d$ be the number of non-satellite lattice points, single satellites, double satellites, and write $\perimeter{R}$ for the \emph{perimeter}, which is the length \emph{of} or the number of (short) edges \emph{in} the frontier of $R$.
  To begin note that a satellite in the frontier of $R$ is in the boundary of at most one backyard.
  This is because the external angle is $180^\circ$ at a single satellite and $60^\circ$ at a double satellite.
  The internal angle at any vertex of another room is at least $120^\circ$, so there is not enough space for two backyards around a satellite; see the left two panels in Figure~\ref{fig:satellite}.
  This implies that we may assume that the frontier of $R$ is a simple polygon, or a collection of such.
  Indeed, if the polygon touches itself at a vertex, this must be a non-satellite, which we can duplicate, and if the polygon touches itself along a sequence of edges, we can remove these edges and their shared vertices.
  This operation neither changes the number of single and double satellites, nor does it increase the perimeter.
  A room that contains only one compound can have perimeter as small as $12$, but a room with at least two compounds has significantly larger perimeter, certainly larger than $15$.
  For $\perimeter{R} \leq 15$, we thus get only one compound and, by construction, only $6$ single and no double satellites.
  This implies the claimed inequality.
  We therefore assume \eqref{eqn:perimeter0}, aim at proving \eqref{eqn:perimeter}, and note that \eqref{eqn:perimeter2} follows as the convex combination of \eqref{eqn:perimeter0} and \eqref{eqn:perimeter} with coefficients $\frac{1}{3}$ and $\frac{2}{3}$:
  \begin{align}
    \perimeter{R} &\geq 16 ;
      \label{eqn:perimeter0} \\
    \perimeter{R} &\geq 1 + s + 2d ;
      \label{eqn:perimeter} \\
    \perimeter{R} &\geq \tfrac{1}{3} 16 + \tfrac{2}{3} (1+d+2d) = 6 + \tfrac{2}{3} s + \tfrac{4}{3} d .
      \label{eqn:perimeter2}
  \end{align}
  It remains to prove \eqref{eqn:perimeter}.
  Call the endpoints of an edge in the frontier of $R$ \emph{neighbors}.
  Two neighbors cannot both be double satellites, else they would belong to a common compound, which contradicts that the distance between them is at least $\sqrt{3}$.
  Furthermore, if a double satellite neighbors a single satellite, then this is only possible if they are vertices of an equilateral triangle bounding a backyard, as in Figure~\ref{fig:butterfly} on the left.
  For lack of space around this triangle, its third vertex is a non-satellite.
  The contribution of these three vertices to the right-hand side of \eqref{eqn:perimeter} is $2+1+0 = 3$.
  Hence, we can remove the three edges from the left-hand side and the three vertices from the right-hand side of \eqref{eqn:perimeter} without affecting the validity of the inequality.
  As illustrated in Figure~\ref{fig:butterfly} on the left, two such triangles may touch at a non-satellite vertex, but this does not matter and we can remove the edges and vertices of both triangles from \eqref{eqn:perimeter}.
  \begin{figure}[b]
    \centering \vspace{0.0in}
    \resizebox{!}{1.6in}{\input{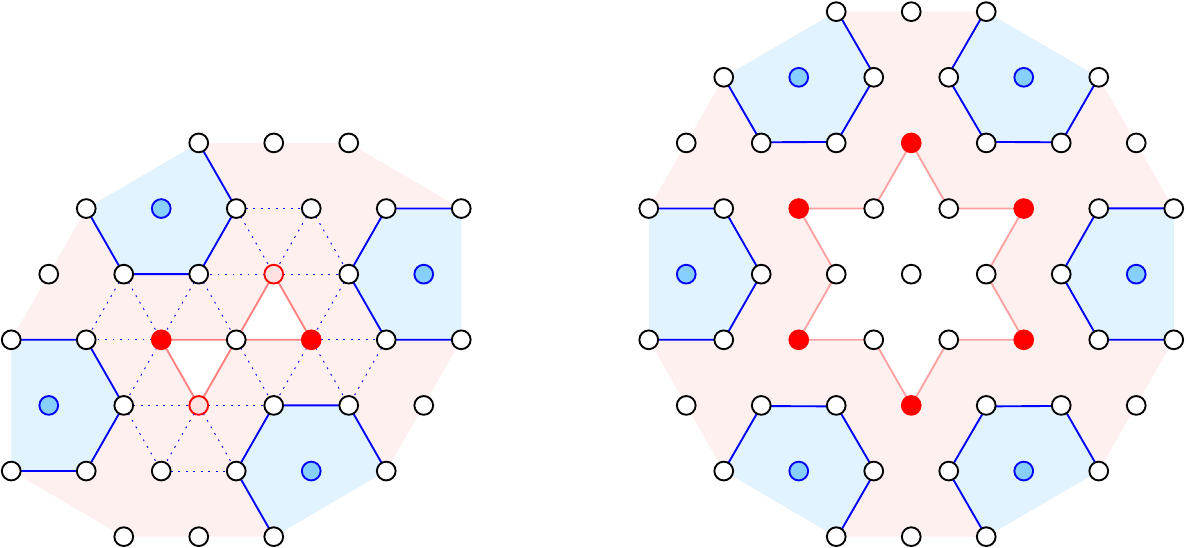_t}}
    \vspace{-0.0in}
    \caption{\emph{Left:} two touching triangular backyards.
    Their shared vertex is a non-satellite, the two \emph{red} vertices are double satellites, and the two \emph{pink} vertices are single satellites.
    \emph{Right:} unique polygon with strictly alternating double satellites and non-satellites.
    On \emph{both sides}, all (partially drawn) \emph{blue} compounds are different and belong to the same (partially drawn) \emph{pink} room.}
    \label{fig:butterfly}
  \end{figure}

  \smallskip
  We can therefore assume that both neighbors of a double satellite are non-satellites.
  Hence, between any two double satellites there is at least one non-satellite, which implies $p \geq d$.
  But $p = d$ only if $p = d = 0$ or there is strict alternation between double satellites and non-satellites.
  It is not possible that all vertices in the frontier are single satellites, because this contradicts that the distance between any two of them is at least $\sqrt{3}$.
  Strict alternation is possible, but only for the polygon of $12$ edges shown in Figure~\ref{fig:butterfly} on the right.
  By assumption, $\Thicken{2}{B}$ has at least two rooms, so not all backyards of $R$ can be bounded by such $12$-gons.
  But this implies $p \geq d+1$, so $\perimeter{R} = p+s+d \geq 1+s+2d$, as claimed.
\end{proof}

To generalize the above concepts to $k \geq 1$, we let $B_{2k-1} \subseteq B_{2k}$ be the vertex sets of two nested components of $\Tree{2k-1}{B}$ and $\Tree{2k}{B}$, and call $\Thicken{k}{B_{2k-1}}$ a \emph{room} and $\Thicken{k}{B_{2k}}$ a \emph{block} of $\Thicken{k}{B}$.
The rooms that share edges join to form \emph{houses}, and the blocks separated by channels that are only $\sqrt{3}/2$ wide join to form \emph{compounds}.
Write $r_k, h_k, b_k, c_k$ for the number of rooms, houses, blocks, compounds of $\Thicken{k}{B}$, $\alpha_k, \beta_k$ for the number of backyards adjacent to at most $2$, at least $3$ houses, and $s_k, d_k, e_k, t_k$ for the satellite sums of $\Thicken{k+1}{B}$.
We can now extend Claims~{\sf (i)} to {\sf (v)} and Lemmas~\ref{lem:number_of_backyards} and \ref{lem:satellites_on_the_frontier} merely by substituting $\Thicken{k}{B}$ for $\Thicken{1}{B}$, $\beta_k$ for $\beta_1$, $c_k$ for $c_1$, etc.
In particular, the extension of Claim~{\sf (iv)} to
\begin{align}
  s_k + 2d_k + 2e_k + 3t_k &= 3c_k+3r_{k+1}
  \label{eqn:iv-extension}
\end{align}
assuming $c_k > 1$ will be needed shortly.
We note that \eqref{eqn:iv-extension} and the extension of Lemma~\ref{lem:satellites_on_the_frontier} can be strengthened, but it is not necessary for the purpose of proving Theorem~\ref{thm:maximum_MST-ratio_of_hexagonal_lattice}.

\subsection{Loop-free Subgraphs}
\label{sec:4.7}

Let $V_k$ be the vertices of $\Thicken{k}{B}$ together with all double and triple satellites that lie in the interior of rooms in $\Thicken{k+1}{B}$, and note that $V_j \cap V_k = \emptyset$ whenever $j \neq k$.
Let $V_k'$ be $V_k$ together with the remaining satellites of $\Thicken{k}{B}$, and note that $V_j \cap V_k' = \emptyset$ if $j < k$, but $V_k'$ and $V_{k+1}$ may share points.
To account for this difference, let $\ell$ be the smallest integer such that $r_{\ell+1} = 1$, and define
\begin{align}
  V &= \left\{ \begin{array}{ll}
          V_1  &  \mbox{\rm if~} \ell = 0; \\
          V_1 \sqcup \ldots \sqcup V_{\ell-1} \sqcup V_{\ell}  &  \mbox{\rm if~} \ell \geq 1 \mbox{\rm ~and~} c_\ell = 1; \\
          V_1 \sqcup \ldots \sqcup V_{\ell-1} \sqcup V_{\ell}'  &  \mbox{\rm if~} \ell \geq 1 \mbox{\rm ~and~} c_\ell > 1.
        \end{array} \right.
     \label{eqn:U}
\end{align}
By construction, all points in $V$ belong to $\BL \setminus B$, and all unit length edges connecting these points are candidates for $\hexMST{\BL \setminus B}$.
We therefore let $U$ be a maximal loop-free graph whose vertices are the points in $V$ and whose edges all have unit length.
Since $U$ has no loops, there is an $\hexMST{\BL \setminus B}$ that contains $U$ as a subgraph.
We are therefore motivated to study the number of edges in $U$.
Using a slight abuse of notation, we denote this number $\card{U}$.
For every $k$, let $U_k$ and $U_k'$ be the subgraphs of $U$ induced by $V_k$ and $V_k'$, respectively.
We first count the edges in $U_1$ and $U_1'$.
\begin{lemma}
  \label{lem:size_of_loop-free_subgraph_I}
  Let $r_1 \geq h_1 \geq b_1 \geq c_1$ be the number of rooms, houses, blocks, and compounds of $\Thicken{1}{B}$, and $s_1, d_1, e_1, t_1$ the satellite sums of $\Thicken{2}{B}$.
  Then 
  \begin{align}
    \card{U_1}  &\geq 2r_1 + h_1 + 3b_1 + (e_1 + t_1) - r_2 - 4 ; \\
    \card{U_1'} &\geq 2r_1 + h_1 + 3b_1 +(s_1 + d_1 + e_1 + t_1) - 5 ,
  \end{align}
  in which we assume $c_1 > r_2 = 1$ for the second inequality.
\end{lemma}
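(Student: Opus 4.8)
The plan is to turn the edge count of $U_1$ into a vertex count minus a component count, and then bound each. Since the number of edges of a forest equals its number of vertices minus its number of connected components, and since we are free to pick the loop-free graph $U$, I would choose $U$ so that $U_1=U[V_1]$ is a spanning forest of the graph $G_1$ on vertex set $V_1$ whose edges are the unit-length (short) edges among its points: take a spanning forest of $G_1$ first, then extend it, without creating cycles, to a spanning forest of the short-edge graph on $V_1\cup V_2$ (only $V_1$--$V_2$ edges can be added, as $V_1\cap V_2=\emptyset$), and continue through the remaining $V_k$. Then $\card{U_1}=\card{V_1}-\kappa_1$ and $\card{U_1'}=\card{V_1'}-\kappa_1'$, where $\kappa_1,\kappa_1'$ are the numbers of components of $G_1$ and of the analogous graph on $V_1'$. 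It therefore suffices to prove
\begin{align}
  \card{V_1} &\geq 2r_1 + h_1 + 3b_1 + (e_1 + t_1) - 4 , & \kappa_1 &\leq r_2 ,
\end{align}
and, for the primed inequality (where $c_1>r_2=1$ is assumed), that $\card{V_1'}=\card{V_1}+s_1+d_1$ and $\kappa_1'\leq 1$; the latter two are immediate, since $V_1'$ adds to $V_1$ exactly the $s_1+d_1$ single and frontier-double satellites, and $r_2=1$ leaves a single room of $\Thicken{2}{B}$, hence a single component.

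For $\kappa_1\leq r_2$ I would show that all points of $V_1$ inside one room of $\Thicken{2}{B}$ form one component of $G_1$. Inside a house the room frontiers are unions of short-edge cycles (Claim~{\sf (ii)}) and consecutive rooms share a short edge; inside a block consecutive rooms share a vertex; two blocks of one compound are separated by a channel of width only $\sqrt{3}/2$, so the points of $\BL\setminus B$ trapped in that channel lie on both block boundaries and are joined to them and to each other by short edges. Finally, two compounds in the same room of $\Thicken{2}{B}$ are linked by a hexagonal edge whose hexagonal and Euclidean lengths both equal $3$, and the two intermediate lattice points on it lie on the two compound boundaries and are a unit apart. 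An interior satellite is the apex of a unit equilateral triangle with base on a block boundary, so it is joined to that block by two short edges and creates no new component. Hence $G_1$ has at most one component per room of $\Thicken{2}{B}$.

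For the vertex bound I would write $\card{V_1}=\nu_1+e_1+t_1$ with $\nu_1$ the number of vertices of $\Thicken{1}{B}$, and aim for $\nu_1\geq 2r_1+h_1+3b_1-4$. The raw material is that every room of $\Thicken{1}{B}$ has at least six frontier vertices (Claim~{\sf (iii)}) and that a room $R$ of $\Thicken{2}{B}$ has at least $6+\tfrac{2}{3}s(R)+\tfrac{4}{3}d(R)$ of them (Lemma~\ref{lem:satellites_on_the_frontier}). To convert these per-room counts into a global bound in terms of $r_1,h_1,b_1$, I would run an Euler-characteristic computation on the torus for the short-edge graph spanned by the vertices of $\Thicken{1}{B}$: its faces are the house interiors, the backyards, and a controlled number of narrow-channel faces, so $v-e+f=0$ ties $\nu_1$ to the numbers of short edges, houses, and backyards, the last bounded through Lemma~\ref{lem:number_of_backyards}; discounting the two vertices identified along each short edge glued to build a house, and adding the channel vertices separating blocks within compounds, then gives the claimed inequality. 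Combining with $\kappa_1\leq r_2$ yields the first inequality of the lemma; adding the $s_1+d_1$ frontier satellites and using $\kappa_1'\leq 1$ yields the second.

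\emph{Main obstacle.} The estimate $\nu_1\geq 2r_1+h_1+3b_1-4$ is the heart of the matter. A naive ``six vertices per room, minus two per gluing'' count fails because the room-adjacency pattern inside a house may contain cycles, so vertex identifications overlap; pinning down the exact coefficients $2r_1$, $h_1$, $3b_1$ and the additive constants requires the torus Euler relation combined with the backyard bound of Lemma~\ref{lem:number_of_backyards} and a careful charging of single and double frontier satellites to their rooms via Lemma~\ref{lem:satellites_on_the_frontier}. Everything else --- the reduction to vertices minus components, the bound $\kappa_1\leq r_2$, and the passage from $V_1$ to $V_1'$ --- is routine bookkeeping about which lattice points lie in $\BL\setminus B$ and which short edges are available for $\hexMST{\BL\setminus B}$.
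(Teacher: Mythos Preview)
Your reduction via the forest identity $\card{U_1}=\card{V_1}-\kappa_1$ is sound, and your argument for $\kappa_1\le r_2$ is correct: within a block the room frontiers are linked through shared edges and corners; across a $\sqrt{7}$-channel two frontier vertices sit at unit distance; across an edge of Euclidean length $3$ the two intermediate lattice points are frontier vertices at unit distance; and each interior satellite is joined to its block by two short edges. This part is clean.

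The gap is the vertex bound $\nu_1\ge 2r_1+h_1+3b_1-4$, which you rightly flag as the main obstacle but do not actually prove. Your Euler sketch has slips that matter: the faces of the frontier graph are \emph{rooms} and backyards, not house interiors; ``channel vertices separating blocks within compounds'' do not exist (the $\sqrt{3}/2$-channel contains no lattice points---the connection is a unit edge between existing frontier vertices); and backyards need not be disks on the torus, so $V-E+F=0$ is not available without further work. More to the point, even once an Euler relation is set up it only relates $\nu_1$ to the number $E_1$ of frontier \emph{edges} and the face counts, so you still need exactly the half-edge estimate that the paper carries out. The paper sidesteps the vertex/component split altogether: it charges each room with at least six inward half-edges, removes $r(H)$ edges per house to open rooms to the background, uses the convex hull of each house to guarantee five surviving outward half-edges, then opens backyards---free for those adjacent to at most two houses, at a cost of one edge each for the $\beta_1\le 2h_1-2b_1+2$ others---arriving directly at $\ge 2r_1+h_1+2b_1-2$ frontier edges in a loop-free graph; adding $b_1-r_2$ block-connecting edges, $e_1+t_1$ satellite edges, and removing two for the torus handles gives the bound. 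Your forest-formula route is equivalent in content but not shorter: filling in your vertex bound would require essentially reproducing this half-edge argument.
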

\begin{proof}
  We argue in three steps: first counting edges in $\Frontier{1}{B}$, second counting edges connecting blocks, and third counting edges connecting the satellites.
  In each case, we count only unit length edges, and we make sure that the edges we count do not form loops.

  \smallskip
  For the first step, it is convenient to count \emph{half-edges}, which are the two sides of an edge.
  These two sides either face two rooms, or one faces a room and the other faces the background.
  For a house, $H$, we make its $r(H)$ rooms accessible from the outside by removing $r(H)-1$ edges shared by adjacent rooms plus $1$ edge shared with the background.
  By {\sf (iii)}, each room was originally faced by at least $6$ half-edges, so we still have at least $4r(H) + 1$ of them left.
  Doing this for each house, we make all $r_1$ rooms accessible from the background, and we have at least $4r_1 + h_1$ half-edges left facing these rooms.

  Observe that the convex hull of a house contains at least six of the (short) edges that bound the house.
  One may have been removed, so we still have at least $5$ half-edges facing the background.
  Keeping in mind that the cycles that bound backyards still need to be opened, we now have at least $4r_1 + h_1 + 5h_1$ half-edges and therefore at least $2r_1+3h_1$ edges.
  If a backyard is adjacent to at most two houses, then it has two consecutive (short) edges that enclose an angle less than $\pi$ and that are both shared with the same house.
  Hence, the complementary angle on the side of the house is larger than $\pi$, which implies that these two edges cannot belong to the convex hull of the house.
  We remove one of them and use the half-edge facing the backyard of the other to compensate for the removed half-edge facing the room.
  Since both edges have not yet been accounted for, we still have at least $2r_1+3h_1$ edges.
  If a backyard is adjacent to three or more houses, we also remove one edge, but this time count one less.
  Recalling that $\beta_1$ is the number of such backyards, we still have at least $2r_1 + 3h_1 - \beta_1 \geq 2r_1 + h_1 +2b_1 -2$ edges, in which we get the right-hand side from Lemma~\ref{lem:number_of_backyards}.

  \smallskip
  For the second step, we connect the $b(R)$ blocks inside a common room of $\Thicken{2}{B}$ with $b(R)-1$ short edges.
  A total of $b_1$ blocks are hierarchically organized in $r_2$ rooms, so we add $b_1 - r_2$ short edges to those counted in the first step.
  Similarly, we add $e_1 + t_1$ short edges that connect the double and triple satellites in the interiors of the rooms to the vertices in the frontier of $\Thicken{1}{B}$.
  Finally, we remove two edges to open the meridian and longitudinal cycles of the graph, if they exist.
  The final count is therefore at least $2r_1 + h_1 + 3b_1 + (e_1 + t_1) - r_2 - 4$, which is the claimed lower bound for $\card{U_1}$.

  \smallskip
  For the third step, we assume $c_1 > r_2 = 1$.
  Since there is only one room, there are no shared satellites between different rooms, and we can connect them to the frontier of $\Thicken{1}{B}$ with $s_1 + d_1$ short edges without creating any loop.
  This implies that the number of edges in ${U_1'}$ is at least $2r_1 + h_1 + 3b_1 + (s_1 + d_1 + e_1 + t_1) - 5$, as claimed.
\end{proof}

The bounds in Lemma~\ref{lem:size_of_loop-free_subgraph_I} generalize to $k > 1$, but there are differences.
Most important is the existence of a loop-free graph for thickness $k-1$.
In particular, we have satellites that affect the structure and size of $U_k$ and $U_k'$.
\begin{lemma}
  \label{lem:size_of_loop-free_subgraph_II}
  Let $r_k \geq h_k \geq b_k \geq c_k$ be the number of rooms, houses, blocks, compounds of $\Thicken{k}{B}$, and $s_k, d_k, e_k, t_k$ the satellite sums of $\Thicken{k+1}{B}$.
  Then for $k \geq 2$, we have
  \begin{align}
    \card{U_k}  &\geq (3r_k + \tfrac{1}{3} s_{k-1} + \tfrac{2}{3} d_{k-1}) + 4h_k + 3b_k + (e_k + t_k) - r_{k+1} - 4 ; \\
    \card{U_k'} &\geq (3r_k + \tfrac{1}{3} s_{k-1} + \tfrac{2}{3} d_{k-1}) + 4h_k + 3b_k + (s_k + d_k + e_k + t_k) - 5 ,
  \end{align}
  in which we assume $c_k > r_{k+1} = 1$ for the second inequality.
\end{lemma}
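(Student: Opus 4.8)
The plan is to run the same three-step argument as in the proof of Lemma~\ref{lem:size_of_loop-free_subgraph_I}, now at level $k$: first count unit-length edges on the frontiers of the rooms of $\Thicken{k}{B}$; second add the $b_k - r_{k+1}$ unit edges joining the blocks inside their common rooms of $\Thicken{k+1}{B}$, together with the $e_k + t_k$ unit edges attaching the interior double and triple satellites to those frontiers, and delete two edges to open the meridian and longitudinal cycles; third, under the hypothesis $c_k > r_{k+1} = 1$, add the $s_k + d_k$ unit edges that attach the remaining (frontier) satellites. Steps two and three are word-for-word the same as for $k = 1$, and the edges added there are loop-free for the same reasons. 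Moreover, since we already have a loop-free subgraph on the lower levels $V_1 \sqcup \cdots \sqcup V_{k-1}$, the new edges of $U_k$ only have to avoid loops among themselves and with the frontier of $\Thicken{k}{B}$; the case distinction in the definition of $V$ in \eqref{eqn:U} is precisely what stops an edge from being counted both in $U_{k-1}'$ and in $U_k$ when $V_{k-1}'$ and $V_k$ meet.

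Only the first step needs genuine care, and two facts drive it. First, a room of $\Thicken{k}{B}$ is the $k$-th thickening of a set connected in $\Tree{2k-1}{B}$, so its frontier has at least $6k \ge 12$ edges, the minimum $6k$ attained by a single translate of $k\Hdisk$; this is stronger than the bound $\ge 6$ (the level-$1$ statement~{\sf (iii)}) used before. Second, the level-$k$ form of Lemma~\ref{lem:satellites_on_the_frontier} says that the frontier of a room $R$ of $\Thicken{k}{B}$ carries at least $6 + \tfrac{2}{3}s(R) + \tfrac{4}{3}d(R)$ vertices, where $s(R)$ and $d(R)$ count the single and double satellites of $\Thicken{k-1}{B}$ on that frontier and sum over all rooms to $s_{k-1}$ and $d_{k-1}$ --- and, as in the level-$1$ argument, these satellites are themselves frontier vertices of $\Thicken{k}{B}$, hence vertices of $V_k$. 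Combining these two per-room lower bounds on $\perimeter{R}$ and pushing the result through the half-edge bookkeeping of the $k = 1$ proof --- open every house by deleting its $r(H) - 1$ internal edges and one edge to the background, keep the convex-hull half-edges of each house that face the background (at most one of which was removed by that deletion), and discharge the $\beta_k \le 2h_k - 2b_k + 2$ backyards adjacent to three or more houses by the level-$k$ form of Lemma~\ref{lem:number_of_backyards} --- is meant to upgrade the level-$1$ subtotal $2r_1 + h_1 + 2b_1 - 2$ to $3r_k + 4h_k + \tfrac{1}{3}s_{k-1} + \tfrac{2}{3}d_{k-1} + 2b_k - 2$. Adding the edges from steps two and three then yields the two displayed inequalities.

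I expect the half-edge accounting in the first step to be the main obstacle. One must combine the ``$\ge 6k$ edges'' bound with the ``satellite surplus'' bound without double counting --- a satellite on the frontier already sits among the $6k$ frontier edges of its room --- and this reconciliation is exactly what has to be done carefully to justify the jump from the coefficients $2r_1, h_1$ at level $1$ to $3r_k, 4h_k$ at level $k$ and the emergence of the term $\tfrac{1}{3}s_{k-1} + \tfrac{2}{3}d_{k-1}$. One also has to check that deleting the house-opening and backyard-opening edges never removes a half-edge already credited to a satellite, and that the convex-hull and reflex-angle arguments used at level $1$ to locate spare half-edges facing the background and the backyards carry over verbatim to $k$-thickenings, since the relevant internal and external angles are unchanged. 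Everything else --- the algebra converting the half-edge tallies into the stated coefficients, and the verification that the assembled $U_k$ has no loops --- should be routine once the per-room bound is in place.
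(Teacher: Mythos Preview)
Your outline tracks the paper's three-step argument closely, and your diagnosis of where the new terms $4h_k$ and $\tfrac{1}{3}s_{k-1}+\tfrac{2}{3}d_{k-1}$ come from is right: the former from the convex hull of a level-$k$ house having at least $6k\ge 12$ short edges (so $\ge 11$ background-facing half-edges after one deletion), the latter from applying the level-$k$ form of Lemma~\ref{lem:satellites_on_the_frontier} room by room. But there is a genuine gap in how you account for the extra $r_k$ in the first displayed bound.

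You assert that steps two and three are ``word-for-word the same as for $k=1$'' and therefore try to extract all of $3r_k$ from step one, by combining the per-room bounds $\perimeter{R}\ge 6k$ and $\perimeter{R}\ge 6+\tfrac{2}{3}s(R)+\tfrac{4}{3}d(R)$. That combination does not give $\perimeter{R}\ge 8+\tfrac{2}{3}s(R)+\tfrac{4}{3}d(R)$, which is what you would need: take a room with $s(R)=9$, $d(R)=0$, and perimeter $12$ to see the failure. In the paper, step one yields only $2r_k+\tfrac{1}{3}s_{k-1}+\tfrac{2}{3}d_{k-1}+4h_k+2b_k-2$ edges, exactly as at level $1$ for the room coefficient. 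The missing $r_k$ appears in step two, which is \emph{not} identical to level $1$: for each room $R$ of $\Thicken{k}{B}$ one adds a single short edge connecting the level-$(k-1)$ structure inside $R$---already a connected subgraph by induction---to the frontier of $R$. These $r_k$ edges are what stitch $U_{k-1}$ to $U_k$, and without them $U$ would not even be connected across levels. Once you insert this into step two, the half-edge bookkeeping in step one needs only Lemma~\ref{lem:satellites_on_the_frontier} (no separate $6k$ per-room bound), and the obstacle you flagged disappears.
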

\begin{proof}
  We argue again in three steps: first counting edges in $\Frontier{k}{B}$, second counting edges connecting blocks, and third counting edges connecting to the satellites.
  Each of these three steps is moderately more involved than the corresponding step in the proof of Lemma~\ref{lem:size_of_loop-free_subgraph_I}, and we emphasize the differences.
  
  \smallskip
  The first step starts the construction with Lemma~\ref{lem:satellites_on_the_frontier}, which implies that the rooms in $\Thicken{k}{B}$ are faced by a total of at least $6r_k + \frac{2}{3}s_{k-1} + \frac{4}{3} d_{k-1}$ half-edges.
  After making all rooms accessible to the background, we still have at least $(4r_k + \frac{2}{3}s_{k-1} + \frac{4}{3} d_{k-1}) + h_k$ half-edges.
  Adding the at least $11$ half-edges per house facing the background, we have at least $(4r_k + \frac{2}{3}s_{k-1} + \frac{4}{3} d_{k-1}) + 12h_k$ half-edges and thus at least $(2r_k + \frac{1}{3} s_{k-1} + \frac{2}{3} d_{k-1}) + 6 h_k$ edges.
  Let $\alpha_k$ and $\beta_k$ be the number of backyards adjacent to at most two and at least three houses, respectively.
  By extension of Lemma~\ref{lem:number_of_backyards}, we have $\beta_k \leq 2h_k - 2b_k + 2$.
  We remove an edge per backyard, which for the first type does not affect the current edge count, while the backyards of the second type reduce the count to $(2r_k + \frac{1}{3} s_{k-1} + \frac{2}{3} d_{k-1}) + 4 h_k + 2b_k - 2$.
  
  \smallskip
  For the second step, we connect the blocks of $\Thicken{k}{B}$ inside a common room of $\Thicken{k+1}{B}$ with $b_k - r_{k+1}$ edges.
  Furthermore, we add $r_k$ edges to connect the blocks of $\Thicken{k-1}{B}$ inside a common room of $\Thicken{k}{B}$---which inductively are already connected to each other---to the frontier of this room, and we add at least $e_k + t_k$ edges connecting to the triple satellites of compounds inside the rooms of $\Thicken{k+1}{B}$.
  After removing two additional edges to break the meridian and longitudinal loops, if they exist, we arrive at a lower bound of at least $(3r_k + \frac{1}{3} s_{k-1} + \frac{2}{3} d_{k-1}) + 4 h_k + 3b_k + (e_k + t_k) - r_{k+1} - 4$ edges in $U_k$.
  
  \smallskip
  For the third step, we assume $c_k > r_{k+1} = 1$, in which case we can add at least $s_k + d_k$ edges connecting to the single and double satellites.
  This implies $\card{U_k'} \geq (3r_k + \frac{1}{3} s_{k-1} + \frac{2}{3} d_{k-1}) + 4 h_k + 3b_k + (s_k + d_k + e_k + t_k) - 5$.
\end{proof}

\subsection{Book-keeping}
\label{sec:4.8}

The goal is to show that the average (Euclidean) length of the long edges in $\hexMST{B}$ and the short edges in $\hexMST{\BL \setminus B}$ is at most $\frac{5}{4}$.
We thus assign a \emph{credit} of $\alpha = \frac{1}{4}$ to every short edge and set the \emph{cost} of a long edge to be its Euclidean length minus $\frac{5}{4}$.
For convenience, we set the value of $\alpha$ to $1$ Euro and convert the costs into Euros; see Table~\ref{tbl:hexcosts}.
{\renewcommand{\arraystretch}{1.6}
\begin{table}[hbt]
  \centering \footnotesize
  \begin{tabular}{c||cc|cc||ccc|ccc}
    hex   & $2$ & $2$ & $3$ & $3$ & $4$ & $4$ & $4$ & $5$ & $5$ & $5$ \\
    $L_2$ & $\sqrt{3}$ & $\sqrt{4}$ & $\sqrt{7}$ & \!\!\!$\sqrt{9}$ & $\sqrt{12}$ & $\sqrt{13}$ & $\sqrt{16}$ & $\sqrt{19}$ & $\sqrt{21}$ & $\sqrt{25}$ \\ \hline
    ${\rm cost}$  & {\bf 1.92} & {\bf 3.00} & {\bf 5.58} & {\bf 7.00} & 8.85 & {\bf 9.42} & {\bf 11.00} & 12.43 & {\bf 13.33} & {\bf 15.00}
  \end{tabular}
  \caption{The Euclidean lengths of the edges with hexagonal lengths $2$ to $5$, and their costs in Euros, each truncated beyond the first two digits after the decimal point.}  
  \label{tbl:hexcosts}
\end{table}}

For the accounting, we need the costs of the last two edges for each hexagonal length.
Letting $w_k, x_k$ and $y_k, z_k$ be the costs of the two longest edges with hexagonal length $2k$ and $2k+1$, respectively, we have
\begin{align}
  w_k  &=  \tfrac{1}{\alpha} \left[ \sqrt{4k^2 - 2k + 1} - \tfrac{5}{4} \right], ~
  x_k   =  \tfrac{1}{\alpha} \left[ 2k - \tfrac{5}{4} \right] , 
    \label{eqn:wandx} \\
  y_k  &=  \tfrac{1}{\alpha} \left[ \sqrt{4k^2+2k+1} - \tfrac{5}{4} \right] , ~
  z_k   =  \tfrac{1}{\alpha} \left[ (2k+1) - \tfrac{5}{4} \right] ;
    \label{eqn:yandz}
\end{align}
see Table~\ref{tbl:hexcosts}, which shows the values of $w_1, x_1, y_1, z_1, w_2, x_2, y_2, z_2$ in boldface.
Listing the edges in sequence, we need bounds for the cost differences between consecutive edges:
\begin{align}
  2  \leq   w_k-z_{k-1} &\leq 2.928\ldots ; ~~~
  1.071\ldots   \leq   x_k-w_k   \leq 2 ; 
    \label{eqn:hexdiff12} \\
  2  \leq ~~y_k-x_k~\,  &\leq 2.583\ldots ; ~~~
  1.414\ldots   \leq \,z_k-y_k\, \leq 2 ,
    \label{eqn:hexdiff34}
\end{align}
which are not difficult to prove using elementary computations.
We use accounting with credits and costs to prove that the average (Euclidean) edge length of the two minimum spanning trees is less than $\frac{5}{4}$.
Note that the hexagonal lattice on the torus is obtained by gluing a regular hexagonal portion of the Euclidean hexagonal lattice along opposite sides.
If we choose $12n^2$ points, then this hexagon has $2n+1$ vertices and therefore $2n$ edges per side.
Taking only every other point---so $3n^2$ of the $12n^2$---we still get an integer number of edges per side.
It follows that the $3n^2$ points are the minority color in a $1 : 3$ coloring of the $12n^2$ points.
\begin{lemma}
  \label{lem:total_Euclidean_length_hex}
  Let $\BL$ be the hexagonal lattice with $12n^2$ points and unit minimum distance on the torus, and $B \subseteq \BL$.
  Then $\Length{\MST{B}} + \Length{\MST{\BL \setminus B}} \leq 15 n^2 - \frac{5}{2}$.
\end{lemma}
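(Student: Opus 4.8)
The plan is to first replace $\Length{\MST\cdot}$ by $\Length{\hexMST\cdot}$ using \eqref{eqn:Euclideanhex}, and then prove the refined statement announced in Section~\ref{sec:4.3}: the average Euclidean length of the long edges of $\hexMST{B}$ together with the short edges of $\hexMST{\BL\setminus B}$ is at most $\tfrac54$. Applying this with $B$ and $\BL\setminus B$ interchanged and adding the two instances, and using that the two trees have $12n^2-2$ edges in total, gives $\Length{\MST{B}}+\Length{\MST{\BL\setminus B}}\le \Length{\hexMST{B}}+\Length{\hexMST{\BL\setminus B}}\le \tfrac54(12n^2-2)=15n^2-\tfrac52$. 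Concretely I would phrase the refined statement in the Euro accounting of Table~\ref{tbl:hexcosts}: give every short edge of $\hexMST{\BL\setminus B}$ one Euro of credit (this is $\alpha=\tfrac14$), charge every long edge of $\hexMST{B}$ of Euclidean length $\ell$ a cost of $\tfrac1\alpha(\ell-\tfrac54)$ Euros, and show that the total credit is at least the total cost.

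\textbf{The credit side.} Since the loop-free subgraph $U$ of unit-length edges among the points of $\BL\setminus B$ built in Section~\ref{sec:4.7} has no cycles, some $\hexMST{\BL\setminus B}$ contains it, so the number of short edges available is at least $\card{U}=\sum_k\card{U_k}$, with the top summand replaced by $\card{U_\ell'}$ according to \eqref{eqn:U}. I would add up the per-level bounds of Lemmas~\ref{lem:size_of_loop-free_subgraph_I} and \ref{lem:size_of_loop-free_subgraph_II}: the $\tfrac13 s_{k-1}+\tfrac23 d_{k-1}$ terms appearing at level $k$ are exactly those one is allowed to borrow from level $k-1$, so the sum telescopes into essentially $\sum_k\bigl(2r_k+h_k+3b_k+(s_k+d_k+e_k+t_k)\bigr)$ minus a bounded number of additive constants, and the satellite identity \eqref{eqn:iv-extension}, $s_k+2d_k+2e_k+3t_k=3c_k+3r_{k+1}$, converts the satellite sums into $c_k$- and $r_{k+1}$-type quantities.

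\textbf{The cost side.} I would sort the long edges of $\hexMST{B}$ by hexagonal length and, within each length, by the room/house/block/compound hierarchy of Section~\ref{sec:4.5}: for each $k$ there are $r_k-h_k$ edges of hexagonal length $2k$ merging adjacent rooms into houses (Euclidean length $k\sqrt3$), $h_k-b_k$ of length $2k$ merging houses into blocks, $b_k-c_k$ of length $2k+1$ merging blocks into compounds (Euclidean length $\sqrt{4k^2+2k+1}$-type), $c_k-r_{k+1}$ of length $2k+1$ merging compounds into rooms of the next thickening, plus analogous intermediate families for $k\ge2$; each such edge I would charge the corresponding cost of Table~\ref{tbl:hexcosts}. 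For hexagonal lengths $\ge4$ the geometric slack is comfortable, so the cost-difference inequalities \eqref{eqn:hexdiff12}--\eqref{eqn:hexdiff34}, together with the fact that the greedy construction inserts edges of a fixed hexagonal length in non-decreasing Euclidean order, show that only a bounded number of the genuinely expensive edges (costs $w_k,x_k,y_k,z_k$) can occur, so the crude charge is dominated by the credit. Matching credit against cost level by level, the chain $r_k\ge h_k\ge b_k\ge c_k\ge r_{k+1}$ makes the leading coefficients line up, and the satellites of Section~\ref{sec:4.6} (three per small compound, six for the big compound of each next-level room, counted through \eqref{eqn:iv-extension}) are precisely what pays for the $c_k$ expensive compound-to-room edges.

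\textbf{The main obstacle.} The hard part is the bottom of the hierarchy — the regime flagged in Section~\ref{sec:4.8} in which $\hexMST{B}$ consists only of edges of hexagonal length $2$ and $3$, of Euclidean lengths $\sqrt3,2,\sqrt7,3$ and costs $1.92,3,5.58,7$. There the credit-versus-cost inequality is essentially tight, degenerating to equality for the extremal partition $\BL=\BL_1\sqcup\BL_3$ of \eqref{eqn:minMSTratiohex}, so no slack may be wasted and every additive constant in Lemmas~\ref{lem:number_of_backyards}, \ref{lem:satellites_on_the_frontier}, \ref{lem:size_of_loop-free_subgraph_I} and \ref{lem:size_of_loop-free_subgraph_II} must be tracked exactly. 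After substituting the level-$1$ bounds one is left with a tight linear inequality in $r_1,h_1,b_1,c_1$ and the satellite sums, and closing it requires simultaneously the strengthened frontier count of Lemma~\ref{lem:satellites_on_the_frontier} (which upgrades property~(iii)), the backyard estimate $\beta_1\le2h_1-2b_1+2$ of Lemma~\ref{lem:number_of_backyards}, the satellites in the channels between compounds (which is exactly why compounds, unlike blocks, cannot be packed tightly), and a case split on whether $c_\ell=1$ or $c_\ell>1$ and on whether the top-level room contains one or several compounds — the very distinctions already built into \eqref{eqn:U} and the hypotheses of Lemmas~\ref{lem:size_of_loop-free_subgraph_I} and \ref{lem:size_of_loop-free_subgraph_II}. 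I would therefore settle the hexagonal-length-$2$-and-$3$ case by a direct room-by-room local accounting using properties~(i)--(v), and reduce the general case to it by peeling off one thickening level at a time.
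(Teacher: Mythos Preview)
Your proposal is correct and follows essentially the same route as the paper: reduce to $\hexMST$ via \eqref{eqn:Euclideanhex}, upper-bound the cost of the long edges by the telescoped linear combination in $r_k,h_k,b_k,c_k$ using the chain $r_k\ge h_k\ge b_k\ge c_k\ge r_{k+1}$ together with the difference bounds \eqref{eqn:hexdiff12}--\eqref{eqn:hexdiff34}, lower-bound the credit via Lemmas~\ref{lem:size_of_loop-free_subgraph_I} and~\ref{lem:size_of_loop-free_subgraph_II} and the satellite identity \eqref{eqn:iv-extension}, and treat the tight bottom level ($\ell=1$, hexagonal lengths $2$ and $3$) by the explicit case split you describe. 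One phrase is misleading: it is not that ``only a bounded number of the genuinely expensive edges can occur'' for hexagonal length $\ge 4$ --- every long edge is charged at one of the rates $w_k,x_k,y_k,z_k$, and the point is rather that the alternating-sign pattern of the coefficients (which is exactly what \eqref{eqn:hexdiff12}--\eqref{eqn:hexdiff34} deliver) lets you replace the exact cost sum by the integer-coefficient upper bound $[2r_1+h_1+3b_1+c_1-7]+\sum_{k\ge2}[3r_k+h_k+3b_k+c_k-8]$, which the credit then dominates level by level.
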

\begin{proof}
  By \eqref{eqn:Euclideanhex}, it suffices to prove the inequality for $\hexMST{B}$ and $\hexMST{\BL \setminus B}$.
  For $k \geq 1$, we compare the edges of hexagonal length $2k$ and $2k+1$ in $\hexMST{B}$ with the (short) edges in $U_k$ or possibly in $U_k'$.
  Since $\Tree{2k+1}{B} \setminus \Tree{2k-1}{B}$ is the set of these long edges, we can do this in one step by comparing $\Tree{2\ell+1}{B}$ with $U$, for sufficiently large $\ell$ and $U$ as defined right after the definition of $V$ in \eqref{eqn:U}.
  Recall that $r_k$ is the number of components of $\Tree{2k-1}{B}$ or, equivalently, the number of rooms of $\Thicken{k}{B}$.
  These rooms are organized hierarchically into $h_k$ houses, $b_k$ blocks, and $c_k$ compounds.
  Hence, $r_1 \geq h_1 \geq b_1 \geq c_1 \geq r_2$, etc.
  This implies that there are
  \medskip \begin{itemize}
    \item $r_1-h_1$ edges of hexagonal length $2$ and Euclidean length less than $2$ that connect the rooms pairwise inside the $h_1$ houses;
    \item $h_1-b_1$ edges of hexagonal and Euclidean length $2$ that connect the houses pairwise inside the $b_1$ blocks;
    \item $b_1-c_1$ edges of hexagonal length $3$ and Euclidean length less than $3$ that connect the blocks pairwise inside the $c_1$ compounds;
    \item $c_1-r_2$ edges of hexagonal and Euclidean length $3$ that connect the compounds pairwise inside the $r_2$ rooms of $\Thicken{2}{B}$, etc.
  \end{itemize} \medskip
  The costs for these edges are $w_1$, $x_1$, $y_1$, $z_1$, respectively.
  Setting $z_0 = 0$, and generalizing to $k \geq 1$, we observe that the total cost satisfies
  \begin{align}
    \mbox{\rm cost}
      &\leq  \sum\nolimits_{k \geq 1} \left[ w_k (r_k-h_k) + x_k (h_k-b_k) + y_k (b_k-c_k) + z_k (c_k - r_{k+1}) \right] 
        \label{eqn:hex1} \\
      &=     \sum\nolimits_{k \geq 1} \left[ (w_k-z_{k-1}) r_k + (x_k-w_k) h_k + (y_k-x_k) b_k + (z_k-y_k) c_k \right] 
        \label{eqn:hex2} \\
      &\leq  [2r_1+h_1+3b_1+c_1 - 7] + \sum\nolimits_{k \geq 2} [3r_k + h_k + 3b_k + c_k - 8] .
      \label{eqn:hex3}
  \end{align}
  To see how \eqref{eqn:hex3} derives from \eqref{eqn:hex2}, we first make the sums finite by letting $\ell$ be the smallest integer such that $r_{\ell+1} = 1$.
  Then the last non-zero term in \eqref{eqn:hex1} is $z_\ell (c_\ell-r_{\ell+1})$ and, correspondingly, the last term in \eqref{eqn:hex2} is $z_\ell r_{\ell+1} = z_\ell$, which by \eqref{eqn:yandz} is equal to $8 \ell - 1$.
  But this is the same as the sum of constants in \eqref{eqn:hex3}.
  Furthermore, we note that if $r_k = h_k = b_k = c_k = 1$, for every $k$, then \eqref{eqn:hex2} vanishes because \eqref{eqn:hex1} vanishes, and \eqref{eqn:hex3} vanishes because for any $k$ the corresponding sum of four terms minus the constant vanishes.
  Hence, the difference between \eqref{eqn:hex3} and \eqref{eqn:hex2} vanishes.
  To prove the inequality, we reintroduce the variables, which satisfy $r_1 \geq h_1 \geq \ldots \geq c_\ell$, and look at their coefficients.
  The first is $2-w_1+z_0$, which is positive because $w_1 < 2$ and $z_0 = 0$.
  Indeed, using the inequalities in \eqref{eqn:hexdiff12} and \eqref{eqn:hexdiff34}, we observe that the coefficients alternate between positive and negative.
  For example, $3-w_k+z_{k-1}$ is positive because $w_k - z_{k-1} < 3$, and $1-x_k+w_k$ is negative because $x_k-w_k > 1$.
  This implies that the difference is non-negative, so \eqref{eqn:hex3} follows.

  \smallskip
  The difficult cases are the edges of hexagonal lengths $2$ and $3$.
  We therefore consider the special cases in which all edges in $\hexMST{B}$ have Euclidean length at most $\sqrt{3}, \sqrt{4}, \sqrt{7}, \sqrt{9}$, so $h_1=1, b_1=1, c_1=1, r_2=1$, respectively; see Figure~\ref{fig:fourpartitions}.
  From \eqref{eqn:hex3}, we get
  \begin{align}
    \mbox{\rm cost} &\leq
      \left\{ \begin{array}{ll}
        2r_1 - 2  &  \mbox{\rm if~} r_1 > h_1 = 1; \\
        2r_1 + h_1 - 3  &  \mbox{\rm if~} h_1 > b_1 = 1; \\
        2r_1 + h_1 + 3b_1 - 6  &  \mbox{\rm if~} b_1 > c_1 = 1; \\
        2r_1 + h_1 + 3b_1 + c_1 - 7  &  \mbox{\rm if~} c_1 > r_2 = 1.
      \end{array} \right.
      \label{eqn:cost}
  \end{align}
  The cost needs to be paid from the credit contributed by the (short) edges in $U$, which in these four cases is either $U_1$ or $U_1'$.
  Recall that after the conversion, each short edge contributes one Euro of credit, so Lemma~\ref{lem:size_of_loop-free_subgraph_I} provides lower bounds:
  \begin{align}
    \!\!\mbox{\rm credit} &\geq
      \left\{ \begin{array}{ll}
        2r_1 - 1  &  \mbox{\rm if~} r_1 > h_1 = 1; \\
        2r_1 + h_1 - 2  &  \mbox{\rm if~} h_1 > b_1 = 1; \\
        2r_1 + h_1 + 3b_1 - 5  &  \mbox{\rm if~} b_1 > c_1 = 1; \\
        2r_1 + h_1 + 3b_1 + (s_1 + d_1 + e_1 + t_1) - 5  &  \mbox{\rm if~} c_1 > r_2 = 1.
      \end{array} \right.
      \label{eqn:credit}
  \end{align}
  Comparing \eqref{eqn:credit} with \eqref{eqn:cost}, we get $\mbox{\rm cost} \leq \mbox{\rm credit}$ trivially in the first three cases.
  Using Claim~{\sf (iv)}, we get use $s_1 + d_1 + e_1 + t_1 \geq c_1 \geq (s_1+2d_1+2e_1+3t_1) - r_2 = c_1$, which supports the same in the fourth case.
  To compare the cost with the credit in the remaining cases, we use Lemmas~\ref{lem:size_of_loop-free_subgraph_I} and \ref{lem:size_of_loop-free_subgraph_II} to compute a lower bound for the latter, assuming that $\ell > 1$ is the smallest integer for which $r_{\ell+1} = 1$:
  \begin{align}
    \mbox{\rm credit} 
      &\geq  \card{U_1} + \sum\nolimits_{k=2}^{\ell-1} \card{U_k} + \card{U_\ell'} \\
      &\geq  \left[2r_1+h_1+3b_1+(\tfrac{1}{3}s_1+\tfrac{2}{3}d_1+e_1+t_1)-r_2-4 \right] \nonumber \\
      &~~+ \sum\nolimits_{k=2}^{\ell-1} \left[ 3r_k + 4h_k + 3b_k + (\tfrac{1}{3}s_k+\tfrac{2}{3}d_k+e_k+t_k) - r_{k+1} - 4 \right] \nonumber \\
      &~~+  \left[ 3r_\ell + 4h_\ell + 3b_\ell + (s_\ell+d_\ell+e_\ell+t_\ell) - 5 \right] ,
        \label{eqn:credit2}
  \end{align}
  in which we group the terms with index $k-1$ that appear in the bounds for $\card{U_k}$ and $\card{U_k'}$ with the terms that have the same index.
  Using the extension of Claim~{\sf (iv)} to $k \geq 1$ stated in \eqref{eqn:iv-extension}, we get $\frac{1}{3}s_k + \frac{2}{3}d_k+e_k+t_k \geq \frac{1}{3} (s_k + 2d_k + 2e_k + 3t_k) = c_k+r_{k+1}$, so the lower bound in \eqref{eqn:credit2} exceeds the upper bound in \eqref{eqn:hex3}.
  Hence, $\mbox{\rm cost} \leq \mbox{\rm credit}$.
  In other words, the average Euclidean length of the edges in $\hexMST{B}$ and $\hexMST{\BL \setminus B}$ is at most $\frac{5}{4}$.
  It follows that their total Euclidean length is at most $\frac{5}{4} (n^2 - 2)$, which by \eqref{eqn:Euclideanhex} implies the same for $\MST{B}$ and $\MST{\BL \setminus B}$.
\end{proof}

By Lemma~\ref{lem:total_Euclidean_length_hex}, the average Euclidean length of the edges in $\MST{B}$ and $\MST{\BL \setminus B}$ is less than $\frac{5}{4}$.
Together with \eqref{eqn:minMSTratiohex}, this implies Theorem~\ref{thm:maximum_MST-ratio_of_hexagonal_lattice}.

\section{Discussion}
\label{sec:5}

This paper proves bounds on the supremum and infimum of the maximum MST-ratio for finite sets, as well as of the supremum MST-ratio for lattices in the plane.
There are many directions of generalization, and their connection to the topological analysis of colored point sets started in \cite{CDES23} provides a potential path to relevance outside of mathematics.
\medskip \begin{itemize}
  \item What about sets in the plane that are less restrictive than lattices but still disallow arbitrarily dense clusters of points, such as periodic sets or Delone sets?
  A first result in this direction is the lower bound of $1 + 1 / (11(2 c + 1)^2)$ for the maximum MST-ratio of a set of $n$ points with spread at most $c \sqrt{n}$ proved in \cite{DPT23}.
  \item What about partitions of $\AF \subseteq \Rspace^2$ into three or more sets?
  For example, is it true that the supremum MST-ratio of the hexagonal lattice partitioned into three subsets is $\sqrt{3}$, as realized by the unique partition into three congruent hexagonal grids?
  Is $\sqrt{3}$ the infimum, over all lattices in $\Rspace^2$, of the supremum, over all partitions into three subsets?
  \item What about three and higher dimensions?
  Consider for example the $\FCC$ lattice in $\Rspace^3$ (all integer points whose sums of coordinates are even), and partition it into $2\FCC$ and the rest.
  The MST-ratio of this example is $\frac{9}{8} = 1.125$.
  Is it true that this is the supremum MST-ratio of the $\FCC$ lattice?
  Is $1.125$ the infimum, over all lattices in $\Rspace^3$, of the supremum, over all partitions into two subsets?
\end{itemize} \medskip
Beyond these extensions in discrete geometry, it would be interesting to study the MST-ratio stochastically, to determine the computational complexity of the maximum MST-ratio, and to frame notions of mingling as measured by homology classes of dimension $1$ and higher in elementary geometric terms.
\Skip{
  Recalling that the MST-ratio relates to mingling as measured with $0$-dimensional homology, it would also be interesting to understand the measures with $1$- and higher-dimensional homology, as described in \cite{CDES23}, if at all possible in elementary geometric terms.
  Importantly, stochastic versions of all the above questions are worth studying, both theoretically and experimentally?
  For the latter undertaking, it would be useful to have a fast algorithm that computes the maximum MST-ratio of a finite set of points.
  Is there a polynomial-time algorithm, and if not, to what extent can the maximum MST-ratio be approximated in polynomial time?
} 

\subsubsection*{Acknowledgments}

{\footnotesize
Part of the research described in this paper was conducted during the sabbatical of the third author at the University of Kyoto in Japan.  
We thank the Institute of Advanced Study in Kyoto for the hospitality afforded to the authors of this paper.}


\appendix

\section{Connection to Chromatic Persistence}
\label{app:A}

As mentioned in the introduction, the study of the MST-ratio is motivated by a recent topological data analysis method for measuring the ``mingling'' of points in a colored configuration;
see Figure~\ref{fig:six-pack_hexgrid_10x10}, which shows six persistence diagrams measuring various aspects of the mingling in a bi-colored configuration.
This appendix addresses the meaning of some of these diagrams and explains the connection to the MST-ratio, while referring to \cite{CDES23} for a detailed account of the method.
In particular, we short-cut the description by ignoring the discrete structures that are necessary for the algorithm.
We first sketch the general background from \cite{EdHa10} and \cite{CEHM09}, and then explain the specific setting that motivates the MST-ratio.

\smallskip
Let $\AF \subseteq \Rspace^2$ be a finite set of points, $\chi \colon \AF \to \{0,1\}$ a bi-coloring, and write $B = \chi^{-1} (0)$ and $C = A \setminus B = \chi^{-1} (1)$.
Let $\aaa \colon \Rspace^2 \to \Rspace$ be the function that maps every $x \in \Rspace^2$ to the minimum Euclidean distance between $x$ and the points in $\AF$, and let $\bbb \colon \Rspace^2 \to \Rspace$ and $\ccc \colon \Rspace^2 \to \Rspace$ be the similarly defined functions for $B$ and $C$.
Furthermore, write $\AF_r = \aaa^{-1} [0,r]$, $B_r = \bbb^{-1} [0,r]$, and $C_r = \ccc^{-1} [0,r]$ for the sublevel sets at distance threshold $r \geq 0$.
Each is a union of disks with radius $r$ centered at the points of $\AF$, $B$, and $C$, respectively.
The inclusions $B_r \subseteq \AF_r$ and $C_r \subseteq \AF_r$ induce homomorphisms in $p$-th homology, $b_r \colon \Hgroup{p}{B_r} \to \Hgroup{p}{\AF_r}$ and $c_r \colon \Hgroup{p}{C_r} \to \Hgroup{p}{\AF_r}$, for each dimension $p \in \Zspace$ and every threshold $r \geq 0$.
Assuming field coefficients in the construction of the homology groups, the latter are vector spaces and the homomorphisms are linear maps.

We also have $\AF_r \subseteq \AF_s$ whenever $r \leq s$, so there are also linear maps from $\Hgroup{p}{\AF_r}$ to $\Hgroup{p}{\AF_s}$.
By now it is tradition in the field to consider the \emph{filtration} of the $\AF_r$, for $r$ from $0$ to $\infty$, and the corresponding sequence of homology groups together with the linear maps between them.
Reading this sequence from left to right, we see homology classes being born and dying.
There is a unique way to pair the births with the deaths that regards the identity of the classes, and the \emph{persistence diagram} summarizes this information by drawing a point $(r,s) \in \Rspace^2$ for every homology class that is born at $\AF_r$ and dies entering $\AF_s$; see e.g.\ \cite[Chapter VII]{EdHa10}.
Every death is paired with a birth, but it is possible that a birth remains unpaired---when the homology class is of the domain---in which case the corresponding point is at infinity.
We write $\Dgm{p}{\aaa}$ for the persistence diagram defined by the sublevel sets of $\aaa$, noting that it is a multi-set of points vertically above the diagonal.

\smallskip
Besides $\Dgm{p}{\aaa}$, we consider $\Dgm{p}{\bbb}$ and $\Dgm{p}{\ccc}$, which are the persistence diagrams of the sublevel sets of $\bbb$ and $\ccc$, respectively, and work with the disjoint union, $B_r \sqcup C_r$.
Conveniently, the $p$-th persistence diagram of $\bbb \sqcup \ccc \colon \Rspace^2 \sqcup \Rspace^2 \to \Rspace$ is the disjoint union of $\Dgm{p}{\bbb}$ and $\Dgm{p}{\ccc}$, for all $p$.
Write $b_r \oplus c_r \colon \Hgroup{p}{B_r} \oplus \Hgroup{p}{C_r} \to \Hgroup{p}{\AF_r}$ for the corresponding map in homology.
As proved in \cite{CEHM09}, the sequence of images of the $b_r \oplus c_r$ admit linear maps between them and thus define another persistence diagram, denoted $\Dgm{p}{{\rm im\;} \bbb \sqcup \ccc \to \aaa}$.
Similarly, the kernels of the $b_r \oplus c_r$ define a persistence diagram, denoted $\Dgm{p}{{\rm ker\;} \bbb \sqcup \ccc \to \aaa}$.
To simplify the notation, we write $\kappa_r = b_r \oplus c_r$ and use mnemonic notation to indicate whether a persistence diagram belongs to the domain, image, or kernel of the map:
\begin{align}
  \Domain{p}{\kappa} &= \Dgm{p}{\bbb \sqcup \ccc} , \\
  \Image{p}{\kappa} &= \Dgm{p}{{\rm im\;} \bbb \sqcup \ccc \to \aaa} , \\
  \Kernel{p}{\kappa} &= \Dgm{p}{{\rm ker\;} \bbb \sqcup \ccc \to \aaa}.
\end{align}
The \emph{$1$-norm} of a persistence diagram, $D$, is the sum of the absolute differences between birth- and death-coordinates over all points in $D$, denoted $\norm{D}_1$.
To cope with points at infinity, we use a cut-off---e.g.\ the maximum finite homological critical value, denoted $\cutoff$---so that the contribution of a point at infinity to the $1$-norm is finite.
\begin{figure}[hbt]
    \centering
    \includegraphics[width=
\textwidth]{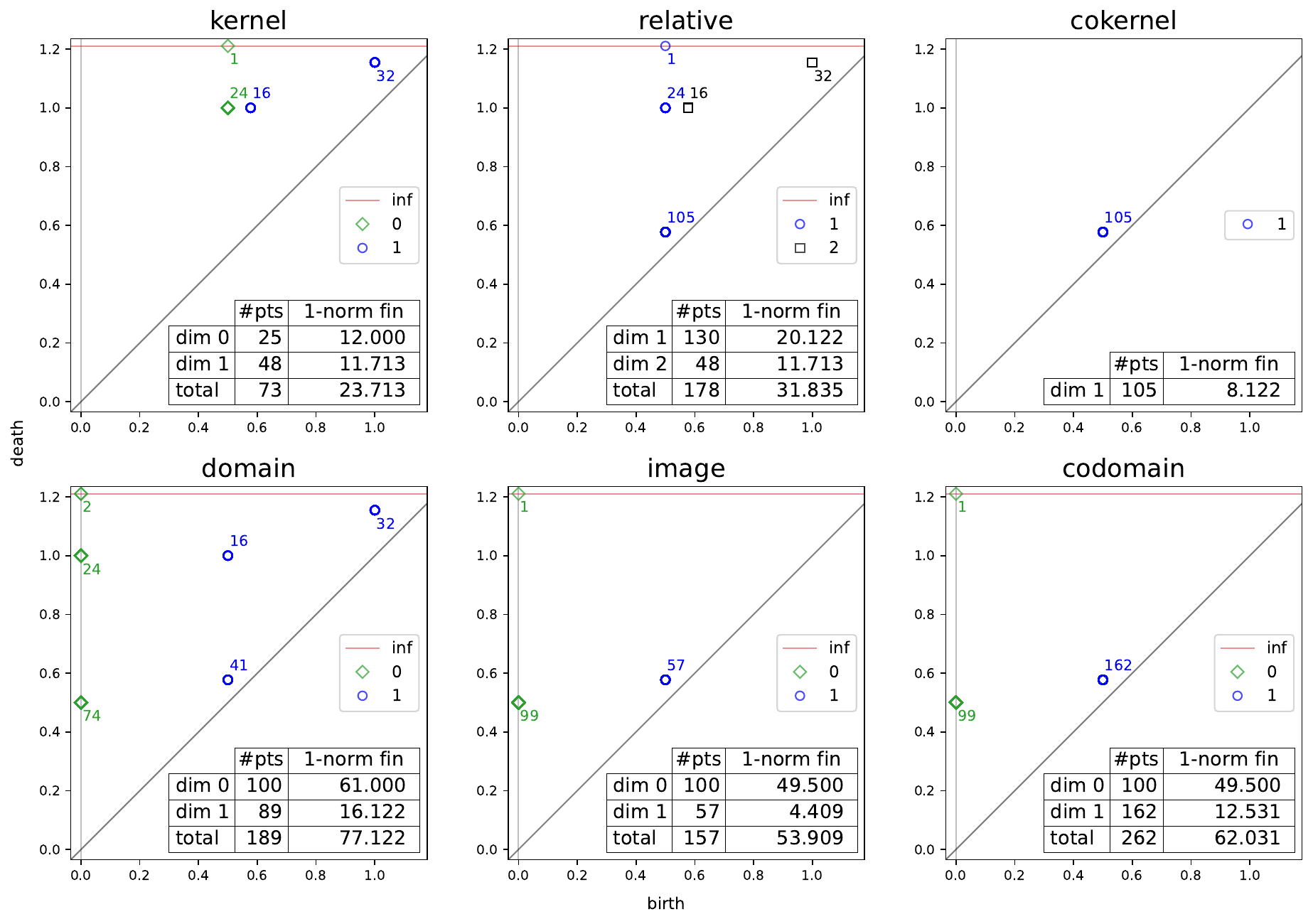} \\
    \vspace{-0.1in}
    \caption{\footnotesize{The six-pack for the $10 \times 10$ portion of the hexagonal lattice with coloring as in Figure~\ref{fig:rhombus}.
    Important for the current discussion are the \emph{diamond-shaped} points in the domain, image, and kernel diagrams.
    To get the MST-ratio, the $1$-norms of the diagrams are computed while ignoring the points at infinity, giving giving $61.0$ and $49.5$ for the domain and the image diagrams, respectively.}
    Compare the ratio of $1.232\ldots$ with the upper bound of $1.25$ proved in Theorem~\ref{thm:maximum_MST-ratio_of_hexagonal_lattice}.}
    \label{fig:six-pack_hexgrid_10x10}
\end{figure}

\smallskip
The kernel, domain, and image form a short exact sequence that splits, which implies $\norm{\Image{p}{\kappa}}_1 + \norm{\Kernel{p}{\kappa}}_1 = \norm{\Domain{p}{\kappa}}_1$; see \cite[Theorem~5.3]{CDES23}.
For dimension $p = 0$, all three $1$-norms can be rewritten in terms of minimum spanning trees.
Indeed, $\norm{\Dgm{0}{\bbb}}_1 = \frac{1}{2} \Length{\MST{B}} + \cutoff$ because every edge in the minimum spanning tree of $B$ marks the death of a connected component in the sublevel set, and $\cutoff$ is contributed by the one component that never dies.
Similarly, $\norm{\Dgm{0}{\ccc}}_1 = \frac{1}{2} \Length{\MST{C}} + \cutoff$, which implies \eqref{eqn:domainnorm}:
\begin{align}
  \norm{\Domain{0}{\kappa}}_1 
    &= \norm{\Dgm{0}{\bbb}}_1 + \norm{\Dgm{0}{\ccc}}_1
     = \tfrac{1}{2} \Length{\MST{B}} + \tfrac{1}{2} \Length{\MST{C}} + 2\cutoff; 
     \label{eqn:domainnorm} \\
  \norm{\Image{0}{\kappa}}_1 &= \tfrac{1}{2} \Length{\MST{\AF}} + \cutoff.
     \label{eqn:imagenorm}
\end{align}
Since persistence diagrams are stable, as originally proved in \cite{CEH07}, these relations imply that minimum spanning trees are similarly stable.
\eqref{eqn:imagenorm} deserves a proof.
There are two ways a connected component of $B_r$ can die in the image: by merging with a component of $C_r$ or with another component of $B_r$.
In the first case, the death corresponds to an edge of $\MST{\AF}$ that connects a point in $B$ with a point in $C$, and in the second case, it corresponds to an edge of $\MST{\AF}$ that connects two points in $B$.
There is also the symmetric case in which the edge connects two points in $C$.
This establishes a bijection between the deaths in $\Image{0}{\kappa}$ and the edges of $\MST{\AF}$.
There is one component that never dies, which accounts for the extra cut-off term and implies \eqref{eqn:imagenorm}.

\smallskip
The $1$-norm of the kernel diagram is the difference between the $1$-norms of the domain diagram and the image diagram: $\norm{\Kernel{0}{\kappa}}_1 = \norm{\Domain{0}{\kappa}}_1 - \norm{\Image{0}{\kappa}}_1$.
It thus makes sense to call $\norm{\Image{0}{\kappa}}_1 / \norm{\Domain{0}{\kappa}}_1$ and $\norm{\Kernel{0}{\kappa}}_1 / \norm{\Domain{0}{\kappa}}_1$ the \emph{image share} and \emph{kernel share}, respectively.
Observe that both are real numbers between $0$ and $1$ and that they add up to $1$.
The intuition is that the kernel share is a measure of the amount of ``$0$-dimensional mingling'' of $B$ and $C$.
In other words, the smaller the image share, the more the two colors mingle.
We therefore get
\begin{align}
  \MSTratio{\AF}{B} &= \frac{\Length{\MST{B}} + \Length{\MST{C}}}
                            {\Length{\MST{A}}}
                     = \frac{\norm{\Domain{0}{\kappa}}_1 - 2\cutoff}
                            {\norm{\Image{0}{\kappa}}_1 - \cutoff} ,
    \label{eqn:imageshare}
\end{align}
for the MST-ratio, which besides the cut-off terms is the reciprocal of the image share.
Hence, the larger the MST-ratio the more the two colors mingle.
In this interpretation, Theorem~\ref{thm:maximum_MST-ratio_for_lattices} says that among all lattices in $\Rspace^2$, the hexagonal lattice is most restrictive to mingling as it does not permit MST-ratios larger than the inf-max, which for $2$-dimensional lattices is $1.25$.

\end{document}
\clearpage
\section{Notation}
\label{app:N}

\begin{table}[h!]
  \centering
  \begin{tabular}{ll}
    $\AF \subseteq \Rspace^2; \AL \subseteq \Rspace^2$
      &  finite set; lattice \\
    $B \subseteq \AF$
      &  partition into two \\
    $\Length{\MST{\AF}}$
      &  length of minimum spanning tree \\
    $\MSTratio{\AF}{B}$
      &  MST-ratio \\
    $\InfSup, \SupSup$
      &  inf-sup, sup-sup for lattices \\
    \\
    $1 = \norm{\uuu} \leq \norm{\vvv} = \nu$
      &  vectors spanning a lattice, and their lengths \\
    $n_r, m_r, p_r, q_r$
      &  numbers of points; columns inside $[-r,r]^2$ \\
    $\alpha, \beta$
      &  integer coefficients \\
    $T(B_n); \ell_i$
      &  minimum spanning tree; lengths of edges \\
    $B_n \subseteq \AL_n; B_n' \subseteq \AL_n'$
      &  (portion of) lattice in plane and torus \\
    $T, T', T''; E, E', E''$
      &  trees; sets of edges \\
    $a, b, c; x, y, z \subseteq V''$
      &  vertices \\
    \\
    $\BL = \BL_1 \sqcup \BL_3$
      &  $1:3$ partition of hexagonal lattice \\
    $\Edist{u}{v}, \hexdist{u}{v}$
      &  Euclidean, hexagonal distance \\
    $\Hdisk, k\Hdisk$
      &  unit disk of hexagonal norm, $k$-th thickening \\
    $\Thicken{k}{B}, \Frontier{k}{B}$
      &  thickening, frontier \\
    $\hexMST{B}, \Tree{k}{B}$
      &  hex norm MST, subset of edges \\
    $\hexMST{\BL \setminus B}, \Uree{k}{B}, \Vree{k}{B}$
      &  hex norm MST, loop-free graphs \\
    $B_1 \subseteq B_2' \subseteq B_2 \subseteq B_3'$
      &  vertex sets of components \\
    $r_k \geq h_k \geq b_k \geq c_k$
      &  \#rooms, \#houses, \#blocks, \#compounds \\
    $s(R), d(R), e(R), t(R)$
      &  \#single, double, double, triple satellites of room \\
    $s_k, d_k, e_k, t_k$
      &  satellite sums of $\Thicken{k+1}{B}$ \\
    $w_k, x_k, y_k, z_k$
      &  costs of edges \\
    \\
    $\aaa, \bbb, \ccc \colon \Rspace^2 \to \Rspace$
      &  min distance maps on the plane \\
    $\AF_r, B_r, C_r$
      &  sublevel sets \\
    $\Hgroup{p}{\AF_r}, \Hgroup{p}{B_r}, \Hgroup{p}{C_r}$
      &  (reduced) homology groups \\
    $b_r, c_r, h_{r,s}$
      &  linear maps between homology groups \\
    $\Dgm{p}{\aaa}, \Dgm{p}{\bbb}, \Dgm{p}{\ccc}$
      &  persistence diagrams \\
    $\norm{D}_1$
      &  $1$-norm of persistence diagram \\
    $\bbb \sqcup \ccc; \kappa_r = b_r \oplus c_r$
      &  disjoint union; direct sum \\
    $\Domain{p}{\kappa}, \Image{p}{\kappa}, \Kernel{p}{\kappa}$
      &  domain, image, kernel diagrams
  \end{tabular}
  \caption{Notation used in the paper.}
  \label{tbl:Notation}
\end{table}

\end{document}